\newcommand{\greg}[1]{\textcolor{red}{#1}} %notes by Greg
\newcommand{\odile}[1]{\textcolor{green}{#1}} %notes by Odile
\newcommand{\N}{{\mathbb N}}
\newcommand{\Tr}{{\mathrm {Tr}}}
\newcommand{\rank}{{\mathrm {rank}}}
\newcommand{\wt}{{\mathrm {wt}}}
\newcommand{\maxwt}{{\mathrm {maxwt}}}
\newcommand{\Res}{{\mathrm {Res}}}
\newcommand{\Span}{{\rm Span}}
\newcommand{\Rsupp}{\mathrm{Rsupp}}
\newcommand{\Gal}{{\rm Gal}}
\newcounter{intro}
\theoremstyle{plain}
\newtheorem{thm}{Theorem}[section]
\newtheorem{lem}[thm]{Lemma}
\newtheorem{cor}[thm]{Corollary}
\newtheorem{prop}[thm]{Proposition}
\newtheorem*{claim*}{Claim} %Claim
\newtheorem*{thm*}{Theorem}
\newtheorem*{problem*}{Problem}
\theoremstyle{definition}
\newtheorem{defn}[thm]{Definition}
\newtheorem*{defn*}{Definition}
\newtheorem{notation}[thm]{Notation}
\theoremstyle{remark}
\newtheorem{rem}[thm]{Remark}
\numberwithin{equation}{subsection}
\begin{document}

\author{Gr\'egory Berhuy\and Jean Fasel\and Odile Garotta}

\title{{\bf Rank weights for arbitrary finite field extensions }}
\date{}
\maketitle

\begin{abstract}
In this paper, we study several definitions of generalized rank weights for arbitrary finite extensions of fields. We prove that all these definitions coincide, generalizing known results for extensions of finite fields.
\end{abstract}

\begin{footnotesize}
\tableofcontents
\end{footnotesize}

\section{Introduction}
  
Rank metric codes were  introduced by Gabidulin  in \cite{Gabidulin85}  as linear codes over a  finite extension of a given finite field.
 Their application to network coding  has recently induced a lot of theoretical research, and
many notions in the theory for codes with the Hamming metric have  found an
equivalent  for codes with the rank metric. One of these is the definition 
of the {\sl generalized Hamming weights} associated to the Hamming distance of a code,  which
has given rise to several proposed definitions of  the \emph{%$r$-th
 generalized rank weights} of a  rank metric code.
 The first definition was given by Kurihara-Matsumoto-Uyematsu \cite{Kurihara15}, an alternative 
one was proposed  by Oggier-Sboui in \cite{Oggier12},  
  then Ducoat proved  in \cite[Prop II1]{Ducoat15} the  equivalence of the first definition with a refinement of the second.
In \cite{Jurrius17}, motivated by their interest in the \emph{$r$-th generalized rank weight enumerator} of a rank metric code, Jurrius-Pellikaan 
gave yet  another definition of these weights. Moreover,  they considered
 the general setting of finite  Galois extensions, a study initiated  by Augot-Loidreau-Robert in \cite{Augot13}.
% and set out to study
They introduced the notion of rank support of a vector and of a code, proving for finite fields that, in case  the length $n$ of the code is bounded by the degree $m$ of the extension,
there exists a vector in the code $C$ which has same rank support as $C$ (\cite[Proposition 3.6]{Jurrius17} ). This was a
 crucial step in showing  the equivalence of all the proposed definitions in the case of  finite fields (\cite[5.2]{Jurrius17})
and \cite[Remark 3.7]{Jurrius17} raised the question of whether this statement %their result
 holds more generally. 
 To study the case of general  % finite 
 Galois extensions, % as initiated by Augot-Loidreau-Robert in \cite{Augot13}
Jurrius-Pellikaan used the field trace and the Galois closure of a code and  proved 
the equivalence of their definition with the one given by  Kurihara-Matsumoto-Uyematsu. %(\cite[4.4, 5.4]{Jurrius17}). 
In the restricted case of {cyclic} extensions, they  linked rank support and Galois closure and showed the equivalence 
of their definition  with the refined definition of \cite{Ducoat15},  
always under the hypothesis $n\leq m$.% (\cite[Theorem 5.8]{Jurrius17}).

In  this paper we positively answer the question of generalizing \cite[3.6]{Jurrius17} for \emph{any} finite extension
(Theorem \ref{thm:geneJuPe}).

\begin{thm*}
Let $L/k$ be a finite field extension of degree $m$ and let $C\subset L^n$ be a linear subspace. If $m\geq n$, then there exists ${\bf c}\in C$ such that 
\[
\mathrm{Rsupp}(C)=\mathrm{Rsupp}({\bf c}).
\]
\end{thm*}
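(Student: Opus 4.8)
The plan is to reduce at once to the case $\Rsupp(C)=k^n$, and then to argue separately according to whether $k$ is finite or infinite.

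\smallskip

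\noindent\textbf{Reduction.} I would first set $V=\Rsupp(C)$ and $r=\dim_k V$. Since $C\subseteq V\otimes_k L$, a choice of $k$-basis of $V$ identifies $V\otimes_k L$ with $L^r$ and $C$ with a subspace $C_0\subseteq L^r$ satisfying $\Rsupp(C_0)=k^r$, and under this identification a vector of $C$ whose rank support equals $V$ corresponds to a vector of $C_0$ whose rank support equals $k^r$. As $r\le n\le m$, we may therefore assume from the outset that $\Rsupp(C)=k^n$ and look for $\mathbf c\in C$ with $\Rsupp(\mathbf c)=k^n$; since $\Rsupp(\mathbf c)\subseteq\Rsupp(C)=k^n$ always and $\rank(\mathbf c)=\dim_k\Rsupp(\mathbf c)\le n$, this just means looking for $\mathbf c\in C$ of full rank $n$. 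We may also assume $C\neq 0$. Finally, $\Rsupp(C)=k^n$ means precisely that $C$ lies in no \emph{$k$-rational hyperplane} $H_\phi=\{\mathbf x\in L^n:\sum_l\phi_l x_l=0\}$, $\phi\in k^n\setminus\{0\}$.

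\smallskip

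\noindent\textbf{Finite base field.} If $|k|=q<\infty$, this is a counting argument in the spirit of \cite{Jurrius17}: a vector $\mathbf c\in C$ fails to have full rank exactly when $\mathbf c\in H_\phi$ for some $\phi$, and for each such $\phi$ the space $H_\phi$ is an $L$-hyperplane of $L^n$ not containing $C$, so $C\cap H_\phi$ is a proper $L$-subspace of $C$, with at most $q^{m(d-1)}$ elements, $d=\dim_L C$. There being only $(q^n-1)/(q-1)<q^n\le q^m$ such hyperplanes, the set of non-maximal-rank vectors of $C$ has fewer than $q^m\cdot q^{m(d-1)}=q^{md}=|C|$ elements, so some $\mathbf c\in C$ has rank $n$.

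\smallskip

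\noindent\textbf{Infinite base field.} Here I would build $\mathbf c$ by repeatedly raising the rank. The key step is an \emph{increment lemma}: if $\mathbf c_1,\mathbf c_2\in L^n$ with $\rank(\mathbf c_1)=r<m$ and $\Rsupp(\mathbf c_2)\not\subseteq\Rsupp(\mathbf c_1)$, then $\rank(\mathbf c_1+s\mathbf c_2)\ge r+1$ for suitable $s\in L$. To prove it, restrict to $W=\Rsupp(\mathbf c_1)+\Rsupp(\mathbf c_2)$ and choose a $k$-basis of $W$ so that $\Rsupp(\mathbf c_1)=\langle e_1,\dots,e_r\rangle$, hence $\mathbf c_1=(\beta_1,\dots,\beta_r,0,\dots,0)$ with $\beta_1,\dots,\beta_r$ part of a $k$-basis $\beta_1,\dots,\beta_m$ of $L$ (possible as $r<m$). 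Since $\mathbf c_2\notin\Rsupp(\mathbf c_1)\otimes_k L$, some coordinate $d_{l_0}$ of $\mathbf c_2$ with $l_0>r$ is nonzero, and after rescaling $\mathbf c_2$ we may take $d_{l_0}=1$. Expanding $\mathbf c_1+s\mathbf c_2$ in the basis $\beta$, the $(r+1)\times(r+1)$ minor of its coordinate matrix on the rows and columns indexed by $\{1,\dots,r,l_0\}$ is a polynomial $P$ in the coordinates of $s$: the column indexed by $l_0$ is literally the coordinate vector of $s$, and keeping the constant parts $e_1,\dots,e_r$ of the other columns shows the linear part of $P$ is $\pm s_{l_0}$, so $P\not\equiv0$; as $k$ is infinite, some $s$ gives $P(s)\neq0$, whence $\rank(\mathbf c_1+s\mathbf c_2)\ge r+1$. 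Now start from $\mathbf c^{(0)}=0$; given $\mathbf c^{(j)}\in C$ of rank $<n$, we have $\Rsupp(\mathbf c^{(j)})\subsetneq k^n=\Rsupp(C)=\sum_{\mathbf c\in C}\Rsupp(\mathbf c)$, so some $\mathbf c'\in C$ has $\Rsupp(\mathbf c')\not\subseteq\Rsupp(\mathbf c^{(j)})$, and since $\rank(\mathbf c^{(j)})<n\le m$ the increment lemma gives $\mathbf c^{(j+1)}=\mathbf c^{(j)}+s\mathbf c'\in C$ of strictly larger rank. The rank being bounded by $n$, finitely many steps yield $\mathbf c\in C$ with $\rank(\mathbf c)=n$, i.e. $\Rsupp(\mathbf c)=\Rsupp(C)$.

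\smallskip

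\noindent\textbf{Main obstacle.} I expect the real difficulty to sit in the infinite-field case: the naive idea of taking a generic $L$-linear combination of a basis of $C$ fails (all such combinations can have strictly sub-maximal rank), so the incrementing procedure is genuinely needed, and the hypothesis $m\ge n$ is used there exactly to guarantee that the basis vector $\beta_{l_0}$ exists, i.e. that there is room in $L$ to raise the rank. Verifying that the relevant minor is a nonzero polynomial is the one honest computation of the argument; the same inequality $q^n\le q^m$ is what makes the count work in the finite case.
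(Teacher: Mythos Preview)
Your proof is correct and takes a genuinely different route from the paper's. The paper first reduces to the plane case $\dim_L C=2$ by an easy induction, and then handles that case geometrically: it constructs a morphism $f:\mathbb{P}^{n-1}_k\to \mathcal{R}_{L/k}\mathbb{P}(C)$ (via Weil restriction) and shows $f$ is not surjective on rational points---by a dimension comparison when $k$ is infinite, and by comparing $\lvert\mathbb{P}^{n-1}(k)\rvert$ with $\lvert\mathbb{P}^1(L)\rvert$ when $k$ is finite. A rational point outside the image gives the desired vector.

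You skip the scheme-theoretic machinery entirely. Your finite-field count is the same inequality, phrased at the level of vectors rather than projective points. For infinite $k$, your increment lemma replaces the dimension argument by an explicit $(r+1)\times(r+1)$ minor computation: changing to the $k$-basis $\beta_1,\dots,\beta_m$ of $L$ makes the constant part of the submatrix block upper-triangular with zero corner, so the linear part of the determinant is visibly $s_{l_0}$. This is elementary and self-contained, and the hypothesis $m\ge n$ enters exactly where you say (to guarantee $r<m$, i.e.\ a free basis slot $\beta_{l_0}$). The paper's approach is more conceptual---the morphism $f$ packages all the $k$-rational hyperplane sections at once---but yours avoids having to invoke Weil restriction, dimensions of quasi-projective schemes, and connectedness of $\mathcal{R}_{L/k}\mathbb{P}^1$, which is a real gain in accessibility. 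One small remark: your comment that ``a generic $L$-linear combination can have strictly sub-maximal rank'' is slightly misleading---the paper's argument shows precisely that a Zariski-generic point of $\mathcal{R}_{L/k}\mathbb{P}(C)$ does work when $k$ is infinite; what is true is that genericity over $L$ in the na\"ive sense (treating the $\lambda_i\in L$ as single parameters) is not enough, and your incrementing procedure is a clean way to bypass any genericity discussion.
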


Moreover, we generalize the notion of the Galois closure of an $L$-linear code to an arbitrary finite field extension (Definition \ref{C*}).

\begin{defn*}
Let $L/k$ be {\bf any} field extension, and let $n\geq 1.$ For any $L$-linear subspace $C$ of $L^n$, we denote by $C^*$  the intersection of all $L$-linear subspaces of $L^n$ 
extended from $k^n$ and containing $C$.
\end{defn*}

We prove that $C^*$ coincides with the Galois closure of $C$ in case the field extension is Galois, using that for such an extension a linear code $C$ in $L^n$ is Galois invariant if and only if it has a basis in $k^n$ (\cite{Giorgetti10}). This generalization of the Galois closure of an $L$-linear code allows to consider the following definitions (again, for an arbitrary extension of fields).

\begin{defn*}\label{4def-grw} 
\begin{enumerate}
\item $d_{R,r}(C)=\min\limits_{\stackrel{D\subset C} {\dim(D)=r}}\wt_R(D)$, following \cite[Definition 2.5]{Jurrius17} 
\item ${\cal M}_r(C)=\min\limits_{\stackrel{V\subset L^n,V=V^*} {\dim(C\cap V)\geq r}}\dim V$,   following  \cite[Definition 5]{Kurihara15}
\item   $O\!S_r(C)=\min\limits_{\stackrel{D\subset C} {\dim(D)=r}} \maxwt_R(D)$,  following \cite{Oggier12}
\item  $D_r(C)=\min\limits_{\stackrel{D\subset C} {\dim(D)=r}} \maxwt_R(D^*)$, following \cite{Ducoat15}. 
\end{enumerate}
\end{defn*}

In this setting, we finally present a unified proof of the equivalence of all proposed definitions of generalized rank weights (Theorem \ref{thm:geneJuPe}).

\begin{thm*}
Let $L/k$ be a finite %separable 
extension of degree $m$, let $n\geq 1$ be an integer such that $n\leq m$ and let $C$ be an $L$-linear subspace of $L^n$. Then the above four definitions coincide. 
\end{thm*}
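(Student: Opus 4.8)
The plan is to deduce everything from two facts: the structural identity $D^{*}=L\cdot\Rsupp(D)$ valid for an arbitrary $L$-linear subspace $D\subseteq L^{n}$, and the equality $\maxwt_R(D)=\wt_R(D)$ for every such $D$ when $n\leq m$, the latter being a reformulation of Theorem~\ref{thm:geneJuPe}. For the first I would note that an $L$-linear subspace $V\subseteq L^{n}$ with $V=V^{*}$ is exactly one of the form $V=L\cdot W$ with $W\subseteq k^{n}$, and that $D\subseteq L\cdot W$ holds if and only if $\Rsupp(D)\subseteq W$ by minimality of the rank support; intersecting over all such $W$ gives $D^{*}=L\cdot\Rsupp(D)$, hence $\Rsupp(D^{*})=\Rsupp(D)$ and
\[
\wt_R(D^{*})=\dim_{k}\Rsupp(D)=\wt_R(D)=\dim_{L}D^{*}.
\]

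Using only this, I would first prove ${\cal M}_r(C)=d_{R,r}(C)$, which needs no hypothesis on $n$. If $D\subseteq C$ has dimension $r$ and realizes $d_{R,r}(C)$, then $V:=D^{*}$ satisfies $V=V^{*}$, $D\subseteq C\cap V$ and $\dim_{L}V=\wt_R(D)$, so ${\cal M}_r(C)\leq d_{R,r}(C)$; conversely, if $V=V^{*}$ realizes ${\cal M}_r(C)$ and $D\subseteq C\cap V$ has dimension $r$, then $\wt_R(D)\leq\dim_{k}\Rsupp(V)=\dim_{L}V$, which gives the reverse inequality. Next I would record the easy chain: for any $D$ and any ${\bf c}\in D$ one has $\Rsupp({\bf c})\subseteq\Rsupp(D)$, so $\maxwt_R(D)\leq\wt_R(D)$, and since $D\subseteq D^{*}$ also $\maxwt_R(D)\leq\maxwt_R(D^{*})\leq\wt_R(D^{*})=\wt_R(D)$; taking minima over $r$-dimensional $D\subseteq C$ yields $O\!S_r(C)\leq D_r(C)\leq d_{R,r}(C)$, again with no hypothesis on $n$.

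It remains to close the loop, and this is the only point where $n\leq m$ enters. For an $L$-linear subspace $D\subseteq L^{n}$ of dimension $r$, Theorem~\ref{thm:geneJuPe} applied to $D$ produces ${\bf c}\in D$ with $\Rsupp({\bf c})=\Rsupp(D)$, hence $\maxwt_R(D)\geq\wt_R({\bf c})=\wt_R(D)$; combined with the previous paragraph this forces $\maxwt_R(D)=\wt_R(D)$, and likewise (applying the theorem to $D^{*}$, using $\Rsupp(D^{*})=\Rsupp(D)$) one gets $\maxwt_R(D^{*})=\wt_R(D^{*})=\wt_R(D)$. Taking the minimum over all $r$-dimensional $D\subseteq C$ gives $O\!S_r(C)=D_r(C)=d_{R,r}(C)$, and together with ${\cal M}_r(C)=d_{R,r}(C)$ all four quantities coincide. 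The one genuine difficulty — and the place where the hypothesis $n\leq m$ is indispensable — is entirely contained in Theorem~\ref{thm:geneJuPe}; beyond that, the only care required is to translate cleanly between rank support, rank weight and the closure operation $(-)^{*}$ of Definition~\ref{C*}, and to observe that passing from $C$ to a subspace $D$ or to $D^{*}$ keeps one inside $L^{n}$, so that the theorem continues to apply.
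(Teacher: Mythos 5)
Your proof is correct and follows essentially the same route as the paper: both arguments reduce everything to the identity $D^*=\Rsupp(D)_L$ (Proposition \ref{Rsupp-C*}, with its consequences $\wt_R(D^*)=\wt_R(D)=\dim_L D^*$) and to Theorem \ref{thm:geneJuPe}, which is where the hypothesis $n\leq m$ enters. The only difference is that you spell out the equality ${\cal M}_r(C)=d_{R,r}(C)$ in full, whereas the paper defers that step to the proof in \cite{Jurrius17}, observing that it carries over verbatim once those two ingredients are available.
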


The paper is organized as follows. Section \ref{sec-basics}  presents basic results on rank supports and their relations with the restriction
of a code. In particular, Proposition \ref{Rsupp-perp} may be seen as a generalization of the well-known result
 of Delsarte \cite[Theorem 2]{Delsarte75}  to arbitrary finite extensions.
Section \ref{geom}   is devoted to giving a geometric proof of Theorem \ref{thm:geneJuPe}. We warn the reader that this section involves some algebraic geometry, and in particular the notion of Weil restriction. 
In section \ref{sec-C*}  we present our generalization of the notion of Galois closure of an $L$-linear code for arbitrary field extensions (Definition \ref{C*}). 
We show that our closure has nice properties, in particular we prove in  Proposition \ref{Rsupp-C*} 
that for finite extensions it is precisely the $L$-extension of the rank support, thus generalizing the Galois case
 \cite[5.5]{Jurrius17}.
In section \ref {equiv-def}, we  %show how 
 use  %the previous results  %
this proposition and Theorem \ref{thm:geneJuPe}  to prove the 
equivalence of all four  (extended) definitions of the $r$-th generalized rank weights (under the hypothesis $m\geq n$), in the setting of {arbitrary}  finite extensions (Theorem \ref{equivdef}).

\subsection*{Acknowledgments}
The authors warmly thank Fr\'ed\'erique Oggier for a gentle introduction to rank metric codes and
to the paper \cite{Jurrius17}.

\section{Basics on rank supports}\label{sec-basics}

Let $k\subset L$ be a finite field extension and let $n\in\N^*$. Given an $L$-linear subspace $C$ of $L^n$, we define (following \cite{Giorgetti10}) its \emph{restriction} $\Res (C)$
as the $k$-linear subspace $C\cap k^n$ of  $k^n$. Given a $k$-linear subspace $D$  of $k^n$, we let $D_{L}$ denote the $L$-linear subspace of $L^n$
generated by $D$. Note that $D_L$ is canonically isomorphic to $D\otimes_k L$. If $\varphi: D\to D^\prime$ is a $k$-linear map, we denote by $\varphi_L$ the $L$-linear map $D_L\to D^\prime_L$.

Let $m=[ L:k]$ and let $%B=
\alpha_1,\ldots,\alpha_m$ be a $k$-basis of $L$. For any $n$-tuple ${\bf c}=(c_1,\ldots,c_n)\in L^n$, we may write $c_j=\sum_{i=1}^mc_{ij}\alpha_i$ and obtain a matrix $M({\bf c}):=(c_{ij})\in M_{mn}(k)$. 

%The following definitions of rank supports were introduced by Jurrius and Pellikaan \cite{Jurrius17}. 
The definitions of rank supports of a vector and of a subspace of $L^n$  were introduced by Jurrius and Pellikaan \cite{Jurrius17}. 

\begin{defn}
\begin{itemize}
\item[a)] For any ${\bf c}\in L^n$, we call \emph{rank support} of ${\bf c}$ the subspace of $k^n$ generated by the rows of 
$M({\bf c})$ and write it $\mathrm{Rsupp}({\bf c})$.
Its dimension is the \emph{rank weight} of ${\bf c}$, denoted by $\wt_R({\bf c})$.
\item[b)] Let  $C\subset L^n$  be an $L$-linear subspace. We call \emph{rank support} of $C$ the $k$-linear subspace 
$$\Span_k (\mathrm{Rsupp}({\bf c})\vert \,\break{\bf c}\in C)$$ 
and write it $\mathrm{Rsupp}(C)$. Its dimension is the \emph{rank weight} of $C$, denoted by $\wt_R(C)$.
\item[c)] The \emph{rank distance} $d_R(C)$ of an $L$-linear subspace $C$ of $L^n$ is the minimum rank of a matrix 
$M({\bf c})$, for ${\bf c}$ nonzero in $C$. That is, we have  $d_R(C)=
\min\limits_{{\bf c}\in C\setminus \{0\}} \wt_R({\bf c})$.
% minimum rank weight of a nonzero vector in $C$   
\end{itemize}
\end{defn}

We now  state some  basic properties of rank supports. 

\begin{prop}\label{prop:rkbasic}
Let ${\bf c}$, ${\bf c}'\in L^n$, let  $\alpha\in L^*$ and let $C$ be an  $L$-linear subspace of $L^n$.
 Then  
\begin{enumerate}
\item $\mathrm{Rsupp}({\bf c})$ does not depend on the choice of the $k$-basis %$B$ 
of $L$.
\item $\mathrm{Rsupp}(\alpha {\bf c})= \mathrm{Rsupp}({\bf c})$.
\item $\mathrm{Rsupp}({\bf c}+{\bf c}')\subset  \mathrm{Rsupp}({\bf c})+ \mathrm{Rsupp}({\bf c}')$.
\item If ${\bf c}_1,\ldots,{\bf c}_r$ generate $C$ as an $L$-linear space, then  $\mathrm{Rsupp}(C)$ is the $k$-linear sum
of the \break $\mathrm{Rsupp}({\bf c}_i)$, $i=1,\ldots,r$.
\item Suppose ${\bf c}\neq 0$. Then  $\wt_R({\bf c})=1$  if and only if 
there exists $\lambda\in L^*$ such that $\lambda{\bf c}\in k^n$ . In that case we
have $\,\mathrm{Rsupp}(\lambda{\bf c})=k\lambda{\bf c}$.
%odile: je propose d'ajouter un :
\item $C$ has rank distance $d_R(C)=1$ if and only if $\,\Res (C)\neq \{0\}$.
%qui s'appliquera a $C^\perp$: ainsi dans l'intro de la Prop \ref{thm nondeg},  j'ajoute que  $C$ non degenere 
%ssi $\Res (C^\perp)=\{0\}$.}
\item %For all  $L$-linear subspaces $C$ of $L^n$, w
We have the inclusion of $k$-subspaces $\Res (C)\subset \mathrm{Rsupp}(C)$.
\item %For all  $L$-linear subspaces $C$ of $L^n$, w
We have the inclusion of $L$-subspaces $C \subset \mathrm{Rsupp}(C)_L\,$.
 \end{enumerate}
\end{prop}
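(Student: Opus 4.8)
The statement to prove is Proposition~\ref{prop:rkbasic}, listing eight basic properties of rank supports. Let me sketch the proof plan.

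The plan is to establish the eight items roughly in order, since later items lean on the earlier ones. First I would handle (1): changing the $k$-basis $\alpha_1,\dots,\alpha_m$ of $L$ to another basis $\beta_1,\dots,\beta_m$ corresponds to multiplying $M(\mathbf{c})$ on the left by the (invertible) change-of-basis matrix $P\in GL_m(k)$; since left multiplication by an invertible matrix preserves the row space, $\mathrm{Rsupp}(\mathbf{c})$ is unchanged. For (2), if $\alpha\in L^*$ write $\alpha\alpha_i=\sum_j p_{ji}\alpha_j$, so $M(\alpha\mathbf{c})=PM(\mathbf{c})$ where $P=(p_{ji})$ is the matrix of multiplication by $\alpha$, which is invertible because $\alpha\in L^*$; again the row space is preserved. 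Item (3) is immediate since $M(\mathbf{c}+\mathbf{c}')=M(\mathbf{c})+M(\mathbf{c}')$ and the row space of a sum of matrices is contained in the sum of the row spaces. Item (4): the inclusion $\supseteq$ is clear; conversely, by (2) and (3), for $\mathbf{c}=\sum_i\lambda_i\mathbf{c}_i$ we get $\mathrm{Rsupp}(\mathbf{c})\subseteq\sum_i\mathrm{Rsupp}(\lambda_i\mathbf{c}_i)=\sum_i\mathrm{Rsupp}(\mathbf{c}_i)$, so taking the span over all $\mathbf{c}\in C$ gives equality.

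For (5): if $\wt_R(\mathbf{c})=1$ then all rows of $M(\mathbf{c})$ lie on a single line $k\mathbf{v}$ with $\mathbf{v}\in k^n$, so $c_{ij}=\mu_i v_j$ for scalars $\mu_i\in k$ not all zero; then $\mathbf{c}=(\sum_i\mu_i\alpha_i)\mathbf{v}$, and setting $\lambda=(\sum_i\mu_i\alpha_i)^{-1}\in L^*$ gives $\lambda\mathbf{c}=\mathbf{v}\in k^n$. Conversely if $\lambda\mathbf{c}\in k^n\setminus\{0\}$ then $M(\lambda\mathbf{c})$ has a single nonzero row (in the basis), so $\wt_R(\lambda\mathbf{c})=1$, and by (2) $\wt_R(\mathbf{c})=1$; moreover $\mathrm{Rsupp}(\lambda\mathbf{c})=k\lambda\mathbf{c}$ directly. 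Item (6): $d_R(C)=1$ means some nonzero $\mathbf{c}\in C$ has $\wt_R(\mathbf{c})=1$, i.e.\ by (5) there is $\lambda\in L^*$ with $\lambda\mathbf{c}\in k^n$; since $C$ is $L$-linear, $\lambda\mathbf{c}\in C\cap k^n=\Res(C)$, and $\lambda\mathbf{c}\neq 0$, so $\Res(C)\neq\{0\}$; the converse is the same argument read backwards (any nonzero element of $\Res(C)$ has rank weight $1$).

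Finally (7) and (8): for (7), take $\mathbf{v}\in\Res(C)=C\cap k^n$; then $\mathbf{v}\in C$ so $\mathrm{Rsupp}(\mathbf{v})\subseteq\mathrm{Rsupp}(C)$, and since $\mathbf{v}\in k^n$ its matrix $M(\mathbf{v})$ has $\mathbf{v}$ as its only (possibly) nonzero row when the basis is chosen with $\alpha_1=1$ — or more invariantly, $\mathrm{Rsupp}(\mathbf{v})=k\mathbf{v}$ — hence $\mathbf{v}\in\mathrm{Rsupp}(C)$; as this holds for all $\mathbf{v}\in\Res(C)$ we get the claimed inclusion. For (8), it suffices by $L$-linearity of both sides and (4) to check $\mathbf{c}\in\mathrm{Rsupp}(\mathbf{c})_L$ for each $\mathbf{c}\in C$: writing $\mathbf{c}=\sum_i\alpha_i\mathbf{r}_i$ where $\mathbf{r}_i\in k^n$ are the rows of $M(\mathbf{c})$, each $\mathbf{r}_i\in\mathrm{Rsupp}(\mathbf{c})$, so $\mathbf{c}$ is an $L$-combination of elements of $\mathrm{Rsupp}(\mathbf{c})$, i.e.\ $\mathbf{c}\in\mathrm{Rsupp}(\mathbf{c})_L\subseteq\mathrm{Rsupp}(C)_L$. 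None of these steps is a serious obstacle; the only point requiring a little care is keeping the basis-dependence under control, which is why proving (1) first and then working with a convenient basis (e.g.\ one containing $1$) for the later items is the cleanest route.
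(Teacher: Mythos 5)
Your proof is correct and, for items 5--8 (the only ones the paper actually proves rather than citing from Jurrius--Pellikaan), it follows essentially the same route: normalize the basis so that $1$ is a basis vector, observe that a rank-one $M(\mathbf{c})$ has all rows proportional to one nonzero row so that $\lambda\mathbf{c}\in k^n$ for a suitable $\lambda$, and deduce 6--8 from 4 and 5. Your explicit verifications of items 1--4 (change of basis and multiplication by $\alpha$ as left multiplication by matrices in $GL_m(k)$, additivity of $M$, and the span argument for 4) are the standard arguments and are all sound.
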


\begin{proof} 
Properties 1-4  are proved in \cite[\S 2.3]{Jurrius17}. %, so we only prove  the last two statements.
 Let us prove statement 5 :
we may assume that the first vector $\alpha_1$ of our $k$-basis of $L$
is $1$.  Then for any  nonzero ${\bf c}'\in k^n$, all rows of $M({\bf c}')$ are zero except the first one, which
is the $n$-tuple ${\bf c}'$ itself. We then get  $\Rsupp({\bf c}')=k{\bf c}'$ and $\wt_R({\bf c}')=1$. Thanks to
statement 2, it remains to consider the situation when $M({\bf c})$ has rank one. In that case there exists
an $l$ such that all $m$ rows of $M({\bf c})$ are colinear with the nonzero $l$-th row : ${\bf c}^{(i)}=\mu_i
{\bf c}^{(l)}$,  $\mu_i \in k$, $1\leq i\leq m$. We get ${\bf c}=(\sum_{i=1}^m\mu_i\alpha_i){\bf c}^{(l)}\neq 0$, that is $\lambda{\bf c}
\in k^n$  with $\lambda$ the inverse of  $\sum_{i=1}^m\mu_i\alpha_i$. So property 5 is proved.
Since the rank distance of $C$ is the minimum rank weight of a nonzero vector in $C$, %statement 5  
it immediately yields statement 6.
We turn to  statement 7.
If ${\bf c}\in \Res(C)$, we know from  property 5 that $\Rsupp({\bf c})=k{\bf c}$. This gives the inclusion $\Res (C)\subset\Rsupp(C)$. 
Finally we  prove   statement 8. Using property 4, it suffices to consider the case where $C= L{\bf c}$. So
we need to prove that ${\bf c}\in \mathrm{Rsupp}({\bf c})_L\,$. But this is clear using the definition of
$\mathrm{Rsupp}({\bf c})$.
\end{proof}

\begin{prop}\label{k-bas-Res-Rsupp} %\odile{revoir la r\'edaction}
Let $L/k$ be a finite extension, let $n\geq 1$, 
and let  $C\subset L^n$  be an $L$-linear subspace.  Then % following are equivalent
we have  $\Res (C)= \mathrm{Rsupp}(C)$ if and only if
 $C$ has a basis in $k^n$  or, equivalently, if and only if 
  $\dim _k\Res (C)= \dim C$.    
%\end{enumerate}
In particular, we then have $\mathrm{Rsupp}(C)_L=C$.
\end{prop}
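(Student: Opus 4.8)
The plan is to run the argument through the three conditions (a) $C$ has a basis in $k^n$, (b) $\dim_k\Res(C)=\dim C$, and (c) $\Res(C)=\Rsupp(C)$, proving (a)$\Rightarrow$(c)$\Rightarrow$(b)$\Rightarrow$(a) and extracting the ``in particular'' clause along the way. The backbone is the dimension chain
\[
\dim_k\Res(C)\ \le\ \dim C\ \le\ \dim_k\Rsupp(C),
\]
valid for every $C$. For the left inequality, a $k$-basis of $\Res(C)=C\cap k^n$ consists of vectors of $k^n$ that are $k$-linearly independent, hence $L$-linearly independent in $L^n$ (complete such a family to a $k$-basis of $k^n$, which is simultaneously an $L$-basis of $L^n$); these vectors lie in $C$, so their number is at most $\dim C$. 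For the right inequality, use Proposition \ref{prop:rkbasic}(8): from $C\subset\Rsupp(C)_L$ we get $\dim C\le\dim_L\Rsupp(C)_L=\dim_k\Rsupp(C)$.

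Next I would prove (a)$\Rightarrow$(c). Since $\Res(C)\subset\Rsupp(C)$ always holds (Proposition \ref{prop:rkbasic}(7)), only the reverse inclusion needs an argument. Choose an $L$-basis ${\bf c}_1,\ldots,{\bf c}_d$ of $C$ with every ${\bf c}_i\in k^n$. By Proposition \ref{prop:rkbasic}(4), $\Rsupp(C)=\sum_{i=1}^d\Rsupp({\bf c}_i)$, and by the computation in the proof of Proposition \ref{prop:rkbasic}(5) one has $\Rsupp({\bf c}_i)=k{\bf c}_i$ for each $i$ (choosing the $k$-basis of $L$ with $\alpha_1=1$, all rows of $M({\bf c}_i)$ vanish except the first, which is ${\bf c}_i$ itself, and $\Rsupp$ is independent of the basis by Proposition \ref{prop:rkbasic}(1)). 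Hence $\Rsupp(C)=\Span_k({\bf c}_1,\ldots,{\bf c}_d)\subset C\cap k^n=\Res(C)$, giving (c).

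Finally, assuming (c), the chain above collapses to equalities, so
\[
\dim_k\Res(C)=\dim C=\dim_k\Rsupp(C)=\dim_L\Rsupp(C)_L;
\]
combined with $C\subset\Rsupp(C)_L$ this forces $C=\Rsupp(C)_L$, which yields both (b) and the ``in particular'' clause. For (b)$\Rightarrow$(a), take a $k$-basis ${\bf c}_1,\ldots,{\bf c}_d$ of $\Res(C)$ with $d=\dim_k\Res(C)=\dim C$; as observed above these vectors are $L$-linearly independent in $L^n$ and lie in $C$, so they form an $L$-basis of $C$ contained in $k^n$. I do not expect a genuinely hard step here; the only point requiring care is the systematic bookkeeping between $k$-linear and $L$-linear independence and spans, together with isolating the identity $\Rsupp({\bf c}')=k{\bf c}'$ for nonzero ${\bf c}'\in k^n$ from within the proof of Proposition \ref{prop:rkbasic}(5).
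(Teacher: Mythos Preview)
Your proof is correct and follows essentially the same approach as the paper, relying on the same ingredients from Proposition~\ref{prop:rkbasic} (items 4, 5, 7, 8). The only cosmetic difference is that you organize the equivalences as a cycle via the dimension chain $\dim_k\Res(C)\le\dim C\le\dim_k\Rsupp(C)$, whereas the paper proves (a)$\Leftrightarrow$(c) directly by showing that a $k$-basis of $\Res(C)$ spans $C$ over $L$ (using that each ${\bf c}\in C$ is an $L$-combination of the rows of $M({\bf c})$, which lie in $\Rsupp(C)=\Res(C)$).
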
 

\begin{proof} 
Suppose that  $C$ has a basis
${\bf e}_1,\ldots, {\bf e}_r$  in $k^n$. 
Statement 5 of Proposition \ref{prop:rkbasic} shows that\break
 $\Rsupp({\bf e}_i)=k{\bf e}_i$  ($1\leq i\leq r$).
Using statement 4~of the same proposition we see that  the ${\bf e}_i$'s  generate $\Rsupp(C)$.
As ${\bf e}_i\in \Res(C)$, we get $\Rsupp(C)\subset \Res(C)$. We conclude that $\Res (C)= \mathrm{Rsupp}(C)$ using statement 7 in the previous proposition.

Suppose  conversely that $\Res (C)= \mathrm{Rsupp}(C)$ and $C\neq 0$. Then $\Res(C)\neq 0$. Let 
${\bf e}_1,\ldots, {\bf e}_s$
be a basis of $\Res(C)$. These vectors considered in $L^n$ are also independent over $L$.
We show that they  generate $C$ over $L$. Let ${\bf c}\in C$.
Then  $\Rsupp(C)$ contains 
 $\Rsupp({\bf c})$,  so our hypothesis
implies that $\Rsupp({\bf c})\subset \Res(C)$. Using a $k$-basis of the field $L$, we can thus 
write ${\bf c}$ as an $L$-linear combination of vectors in $\Res(C)$, therefore also  of the vectors 
${\bf e}_1,\ldots, {\bf e}_s$. Thus ${\bf e}_1,\ldots, {\bf e}_s$ is a basis of $C$ which  
 is  in $k^n$. Clearly this is equivalent to having $s=\dim C$.

The last equality then follows using the fact that $\Res (C)_L=C$ if $C$ has a basis in $k^n$.
\end{proof} 

We have the following general result. 

\begin{prop}\label{Thm-cas r<m} 
Let $L/k$ be a  finite extension of degree $m$, and let  $n\in\N^*$.  Let $C$ be an  $L$-linear subspace
 of $L^n$ of dimension $%r
\leq m$ and suppose that  $C$ admits an $L$-basis in $k^n$. Then 
 there exists %an element 
${\bf c}\in C$ such that %$C=(L{\bf c} )*$ we have 
$\mathrm{Rsupp}(C)=\mathrm{Rsupp}({\bf c})$.
\end{prop}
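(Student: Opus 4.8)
The plan is to produce an explicit vector ${\bf c}\in C$ and to check by a direct matrix computation that $\Rsupp({\bf c})$ already exhausts $\Rsupp(C)$. The first move is to exploit the hypothesis that $C$ admits an $L$-basis in $k^n$: by Proposition \ref{k-bas-Res-Rsupp} this forces $\Rsupp(C)=\Res(C)$ and $\dim_k\Res(C)=\dim_L C=:r$, so the assumption $\dim C\le m$ becomes $r\le m$. Fix an $L$-basis ${\bf e}_1,\dots,{\bf e}_r$ of $C$ lying in $k^n$; by the same proposition it is at the same time a $k$-basis of $\Res(C)=\Rsupp(C)$.

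Since $r\le m$, one may pick $r$ elements $\lambda_1,\dots,\lambda_r\in L$ that are linearly independent over $k$ — the simplest choice being $\lambda_i=\alpha_i$, the first $r$ vectors of the $k$-basis $\alpha_1,\dots,\alpha_m$ of $L$ fixed above — and set ${\bf c}=\sum_{i=1}^r\lambda_i{\bf e}_i$, which lies in $C$. Writing ${\bf e}_i=(e_{i1},\dots,e_{in})$ with $e_{i\ell}\in k$ and expanding each $\lambda_i$ in the basis $\alpha_1,\dots,\alpha_m$, a short computation gives $M({\bf c})=\Lambda E$, where $E\in M_{rn}(k)$ is the matrix with rows ${\bf e}_1,\dots,{\bf e}_r$ and $\Lambda\in M_{mr}(k)$ is the matrix whose $i$-th column is the coordinate vector of $\lambda_i$ in the basis $\alpha_1,\dots,\alpha_m$. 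With the choice $\lambda_i=\alpha_i$ this is even more transparent: $M({\bf c})$ then has ${\bf e}_1,\dots,{\bf e}_r$ as its first $r$ rows and zero rows below, so its row space is visibly $\Span_k({\bf e}_1,\dots,{\bf e}_r)$.

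In general, to conclude I would note that $\rank\Lambda=r$ (its columns are the coordinate vectors of $k$-linearly independent elements of $L$) and $\rank E=r$ (the ${\bf e}_i$ are independent), whence by the Sylvester rank inequality $\rank(\Lambda E)\ge \rank\Lambda+\rank E-r=r$, while trivially $\rank(\Lambda E)\le r$; so $M({\bf c})=\Lambda E$ has rank $r$. Since its row space is contained in that of $E$, it must coincide with it. Therefore $\Rsupp({\bf c})$, which is by definition the row space of $M({\bf c})$, equals the row space of $E$, namely $\Span_k({\bf e}_1,\dots,{\bf e}_r)=\Res(C)=\Rsupp(C)$, as required. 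I do not foresee a genuine obstacle here: the whole content is the reduction provided by Proposition \ref{k-bas-Res-Rsupp} together with the factorization $M({\bf c})=\Lambda E$; the only points demanding care are the row/column bookkeeping in forming $M({\bf c})$ and the observation that the hypothesis $\dim C\le m$ is exactly what makes the $k$-independent multipliers $\lambda_1,\dots,\lambda_r$ available.
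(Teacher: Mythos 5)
Your proof is correct and is essentially the paper's own argument: both reduce via Proposition \ref{k-bas-Res-Rsupp} to a basis ${\bf e}_1,\dots,{\bf e}_r$ of $C$ in $k^n$ spanning $\Rsupp(C)$, and both take ${\bf c}=\sum_{i=1}^r\alpha_i{\bf e}_i$ so that $M({\bf c})$ has the ${\bf e}_i$ as its first $r$ rows and zeros below. Your extra factorization $M({\bf c})=\Lambda E$ with general $k$-independent multipliers is a harmless (and correct) generalization of the same computation.
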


\begin{proof}
 Let  ${\bf e}_1,\ldots,{\bf e}_r$ denote an $L$- basis of $C$ in $k^n$, with $r\leq m$. 
From Proposition \ref{k-bas-Res-Rsupp} we know that $\mathrm{Rsupp}(C)=\Res(C)$ %has dimension $r$.
is spanned by  ${\bf e}_1,\ldots,{\bf e}_r$.
 Let $\alpha_1,\ldots,\alpha_m$ be a $k$-basis of $L$ and set ${\bf c}=\sum_{i=1}^r \alpha_i\,{\bf e}_i$. Then
 ${\bf c}\in C$ and  the matrix $M({\bf c})$ has the $n$-tuples ${\bf e}_1,\ldots, {\bf e}_r$ as its first $r$ rows, 
and its $n-r$ remaining rows are zero. So $\Rsupp({\bf c})$ %has dimension $r=\dim C$.
is spanned by  ${\bf e}_1,\ldots,{\bf e}_r$ and therefore equal to $\mathrm{Rsupp}(C)$.
\end{proof}

\begin{rem}\label{CN-extd}
Note that since $\wt_R({\bf c})\leq m$ for all ${\bf c}$
in $L^n$, the hypothesis $\dim C\leq m$ in Proposition \ref{Thm-cas r<m} 
is necessary.   %\odile{\`a revoir}
Indeed, if  $C$ admits  a %n $L$-
basis in $k^n$, then $\Rsupp(C)$ has dimension $\dim C$ by \ref{k-bas-Res-Rsupp}; see also Remark
\ref{CNdim} for a stronger statement. %\odile{voir  aussi rem \ref{} }
\end{rem}

Recall that for any finite field extension $k\subset L$, we have the (field) trace $\Tr\colon L\to k$, which is $k$-linear and non trivial in case $L/k$ is separable. We can extend this map to a {\sl trace map} $\,\Tr\colon L^n\to k^n$, which is the component-wise extension of the field trace.  

\begin{defn} 
Let $k\subset L$ be a finite Galois extension with Galois group $G$.
The {\sl Galois closure} of an $L$-linear subspace $C$ of $L^n$, denoted by $C^*$, is the smallest subspace 
of $L^n$ that contains $C$ and is invariant under the component-wise action of $G$ on $L^n$. The subspace $C$
 is called {\sl Galois closed} in case $C$ is $G$-invariant, or equivalently in case $C=C^*$.
\end{defn}

The following result (\cite[Theorem 1]{Giorgetti10}) will play a crucial role in the sequel, notably in \ref{cor-c*c'} 
when we  set our generalized definition of the closure $C ^*$ of $C$.
 
\begin{thm}\label{thm :kbas}
If $L/k$ be a finite Galois extension %with Galois group $G$, let $n\geq 1$, 
and $C$ is an $L$-linear subspace of $L^n,$ then $C=C^*$ if and only if $C=\Res(C)_L$ or, equivalently, if and only if  $C$ admits a basis in $k^n$.

%In particular, $C^*$ is spanned by a family of vectors in $k^n$.
\end{thm}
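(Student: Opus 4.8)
The plan is to prove the three stated conditions pairwise equivalent, the real content being a Galois descent argument. The equivalence of ``$C=\Res(C)_L$'' and ``$C$ admits a basis in $k^n$'' is immediate from Proposition~\ref{k-bas-Res-Rsupp}: if $C=\Res(C)_L$ then any $k$-basis of $\Res(C)$ is an $L$-basis of $C$ contained in $k^n$; conversely a basis of $C$ lying in $k^n$ lies in $\Res(C)$ and witnesses $\dim_k\Res(C)=\dim_L C$, so Proposition~\ref{k-bas-Res-Rsupp} gives $\Res(C)_L=C$. The easy half of the remaining equivalence is also clear: if $C$ has a basis ${\bf e}_1,\ldots,{\bf e}_r$ in $k^n$, then every $\sigma\in G$ fixes each ${\bf e}_i$ pointwise, its coordinates lying in $k=L^G$; hence $\sigma(C)=C$ for all $\sigma$, so $C$ is $G$-invariant and $C=C^*$.

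For the substantive direction, assume $C=C^*$. Since $C^*$ is $G$-invariant and each $\sigma\in G$ acts as an additive, $\sigma$-semilinear automorphism of $L^n$ (the component-wise Galois action), one has $\sigma(C)\subseteq C$ for all $\sigma$, and applying this to $\sigma^{-1}$ gives $\sigma(C)=C$. Fix a $k$-basis $\alpha_1,\ldots,\alpha_m$ of $L$. Given ${\bf c}\in C$, write ${\bf c}=\sum_{i=1}^m\alpha_i{\bf c}^{(i)}$ with ${\bf c}^{(i)}\in k^n$; it suffices to show each ${\bf c}^{(i)}$ lies in $C$, for then ${\bf c}^{(i)}\in C\cap k^n=\Res(C)$ and hence ${\bf c}\in\Res(C)_L$, forcing $C=\Res(C)_L$. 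Applying an arbitrary $\sigma\in G$, and using that $\sigma$ fixes $k^n$ pointwise while $\sigma(C)=C$, we obtain
\[
\sigma({\bf c})=\sum_{i=1}^m\sigma(\alpha_i)\,{\bf c}^{(i)}\in C\qquad(\sigma\in G).
\]
View this as a linear system over $L$ with coefficient matrix $A=\bigl(\sigma(\alpha_i)\bigr)_{\sigma\in G,\,1\le i\le m}\in M_m(L)$ and right-hand side $(\sigma({\bf c}))_{\sigma\in G}\in C^{|G|}$. Because $L/k$ is Galois, hence separable, $A$ is invertible: indeed $\det(A)^2$ is the discriminant of the basis $(\alpha_i)$, which is nonzero, or one invokes Dedekind's lemma on the $L$-linear independence of the automorphisms in $G$. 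Multiplying by $A^{-1}$ writes each ${\bf c}^{(i)}$ as an $L$-linear combination of the vectors $\sigma({\bf c})\in C$, so ${\bf c}^{(i)}\in C$, as desired. Combined with Proposition~\ref{k-bas-Res-Rsupp}, this also yields $C=\Res(C)_L=\Rsupp(C)_L$ whenever $C=C^*$.

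The only non-formal ingredient is the invertibility of $A$, which is precisely where separability of the Galois extension enters; everything else is bookkeeping with the $G$-action and the earlier propositions. The main point to get right is the set-up of the system --- that $\sigma$ acts trivially on $k^n$ and that $\sigma(C)=C$ as an equality, not merely an inclusion (which is why it matters that $\sigma$ is invertible) --- after which the descent is automatic.
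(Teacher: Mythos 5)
Your proof is correct and complete. Note that the paper itself gives no argument for this statement: it is imported verbatim as Theorem~1 of Giorgetti--Previtali \cite{Giorgetti10}, so the only thing to compare against is the standard Galois-descent proof (which also appears, in draft form, in the source via the abstract descent isomorphism $C^G\otimes_k L\simeq C$ for the semilinear $G$-action on $C$). Your argument is a concrete, self-contained implementation of that same descent: rather than citing the descent lemma, you show directly that each component ${\bf c}^{(i)}\in k^n$ of ${\bf c}=\sum_i\alpha_i{\bf c}^{(i)}$ lies in $C$ by inverting the matrix $A=(\sigma(\alpha_i))_{\sigma,i}$, whose invertibility (via the discriminant or Dedekind's lemma) is exactly the nondegeneracy input from separability. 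This buys a proof readable with no external reference and makes transparent where the Galois hypothesis is used; the abstract route is shorter but hides that point inside the descent lemma. Two cosmetic remarks: for the descent computation you only need the inclusion $\sigma(C)\subseteq C$ (so the closing comment that the \emph{equality} $\sigma(C)=C$ is essential overstates the point), and in the first equivalence the observation that a basis of $C$ inside $k^n$ forces $\dim_k\Res(C)=\dim_L C$ tacitly uses that $k$-linearly independent vectors of $k^n$ remain $L$-independent --- the same fact already used in the proof of Proposition~\ref{k-bas-Res-Rsupp}, so nothing is missing.
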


For Galois extensions, Jurrius-Pellikaan have shown
 in \cite[Theorem 4.3]{Jurrius17} that the rank support of a subspace of $L^n$ is equal to its image by $\Tr$. This
 actually holds (with same proof)  for all {\sl separable} finite extensions $L/k$.

\begin{thm}\label{thm:Rsupp-Tr} 
 Suppose the finite  field extension $L/k$ is separable and let $n\geq 1$. Then for any $L$-linear subspace $C$ of $L^n$,
we have $\mathrm{Rsupp}(C)=\Tr(C).$
\end{thm}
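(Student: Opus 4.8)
The plan is to prove that $\mathrm{Rsupp}(C) = \Tr(C)$ by establishing two inclusions, and the argument should parallel the proof of \cite[Theorem 4.3]{Jurrius17} while taking care to use only separability (not the full Galois hypothesis). First I would reduce to the case of a single vector: by statement 4 of Proposition \ref{prop:rkbasic}, $\mathrm{Rsupp}(C)$ is spanned by the $\mathrm{Rsupp}({\bf c})$ for ${\bf c}\in C$, and of course $\Tr(C) = \Span_k(\Tr({\bf c}) \mid {\bf c}\in C)$ since $\Tr$ is $k$-linear; so it suffices to compare, for each fixed ${\bf c}\in L^n$, the space $\mathrm{Rsupp}({\bf c})$ (rows of $M({\bf c})$) with the $k$-span of $\{\Tr(\beta {\bf c}) \mid \beta \in L\}$. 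Note $\Tr(\beta{\bf c})$ has $j$-th coordinate $\Tr(\beta c_j) = \sum_{i=1}^m c_{ij}\Tr(\beta\alpha_i)$, so $\Tr(\beta{\bf c}) = \sum_{i=1}^m \Tr(\beta\alpha_i)\,{\bf c}^{(i)}$ where ${\bf c}^{(i)}$ is the $i$-th row of $M({\bf c})$. This immediately gives $\Tr(\beta{\bf c}) \in \mathrm{Rsupp}({\bf c})$, hence $\Tr(C) \subseteq \mathrm{Rsupp}(C)$, with no separability needed.

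For the reverse inclusion $\mathrm{Rsupp}({\bf c}) \subseteq \Span_k(\Tr(\beta{\bf c}) \mid \beta\in L)$, the point is that each row ${\bf c}^{(i)}$ should be recoverable as a $k$-linear combination of the vectors $\Tr(\beta{\bf c})$. Here is where separability enters: since $L/k$ is separable, the trace form $(x,y)\mapsto \Tr(xy)$ is nondegenerate, so there is a dual basis $\beta_1,\ldots,\beta_m$ of $L$ with $\Tr(\beta_i\alpha_j) = \delta_{ij}$. Then from the formula above, $\Tr(\beta_i {\bf c}) = \sum_{j=1}^m \Tr(\beta_i\alpha_j)\,{\bf c}^{(j)} = {\bf c}^{(i)}$, so every row of $M({\bf c})$ lies in $\Span_k(\Tr(\beta{\bf c}))$, giving $\mathrm{Rsupp}({\bf c}) \subseteq \Tr(L{\bf c}) \subseteq \Tr(C)$. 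Combined with the first inclusion and the reduction to single vectors, this yields $\mathrm{Rsupp}(C) = \Tr(C)$.

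There is essentially no hard part here; the only thing to be careful about is invoking nondegeneracy of the trace form in the separable case (equivalently, existence of a dual basis), which is the precise place the separability hypothesis is used and is the reason the statement can fail for inseparable extensions (where $\Tr$ may be identically zero). I would state the dual-basis fact as a standard consequence of separability, write out the one-line computation $\Tr(\beta{\bf c}) = \sum_i \Tr(\beta\alpha_i){\bf c}^{(i)}$, and then read off both inclusions from it. If one prefers to mirror Jurrius--Pellikaan's phrasing more literally, the same computation can be organized around the observation that the map $\beta \mapsto (\Tr(\beta\alpha_1),\ldots,\Tr(\beta\alpha_m))$ is a $k$-linear isomorphism $L \to k^m$ precisely when the extension is separable, so the set $\{\Tr(\beta{\bf c}) : \beta\in L\}$ spans exactly the row space of $M({\bf c})$.
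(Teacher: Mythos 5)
Your proof is correct, and it is essentially the argument the paper has in mind: the paper gives no proof of its own but states that the proof of Jurrius--Pellikaan's Theorem 4.3 carries over verbatim, and that proof is precisely your computation $\Tr(\beta{\bf c})=\sum_{i}\Tr(\beta\alpha_i)\,{\bf c}^{(i)}$ combined with the existence of a trace-dual basis. You correctly isolate nondegeneracy of the trace form as the only place separability is needed (and the reason one inclusion holds unconditionally), which is exactly the point of the paper's remark that the Galois hypothesis can be weakened to separability.
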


Using this theorem we can 
 generalize the setting of  Theorem 9 in  \cite{Giorgetti10},  which   Giorgetti-Previtali have stated for
  Galois extensions (see \ref{thm :kbas}). 

\begin{prop}
Let $L/k$ be a finite separable extension and let $C$ be an $L$-linear subspace of $L^n$ for some $n\geq 1$. Then 
we have $\Res(C)=\Tr(C)  %=\Rsupp(C)
$ if and only if  %$C= C^*$ or equivalently, if and only if  
$C$ admits a basis in $k^n$.
\end{prop}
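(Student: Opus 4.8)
The plan is to reduce this statement to the already-proved Galois case (Theorem~\ref{thm :kbas}) together with the trace formula of Theorem~\ref{thm:Rsupp-Tr}. First I would observe that the "if" direction is immediate and requires no separability: if $C$ admits a basis in $k^n$, then by Proposition~\ref{k-bas-Res-Rsupp} we have $\Res(C)=\Rsupp(C)$, and by Theorem~\ref{thm:Rsupp-Tr} (applied in the separable case) $\Rsupp(C)=\Tr(C)$, so $\Res(C)=\Tr(C)$. So the content is entirely in the converse.

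For the converse, assume $\Res(C)=\Tr(C)$. By Theorem~\ref{thm:Rsupp-Tr} this says exactly $\Res(C)=\Rsupp(C)$, and then Proposition~\ref{k-bas-Res-Rsupp} immediately gives that $C$ admits a basis in $k^n$. In other words, once Theorem~\ref{thm:Rsupp-Tr} is in hand, the proposition is just the rephrasing of Proposition~\ref{k-bas-Res-Rsupp} obtained by substituting $\Tr(C)$ for $\Rsupp(C)$. I would write the proof in exactly this two-sentence form: invoke Theorem~\ref{thm:Rsupp-Tr} to replace $\Rsupp(C)$ by $\Tr(C)$ throughout, and then quote Proposition~\ref{k-bas-Res-Rsupp}.

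There is essentially no obstacle here beyond making sure the hypotheses line up. The one point to be careful about is that Theorem~\ref{thm:Rsupp-Tr} genuinely requires separability (it is the statement $\Rsupp(C)=\Tr(C)$ that fails for inseparable extensions, since then $\Tr$ can be identically zero), whereas Proposition~\ref{k-bas-Res-Rsupp} holds for arbitrary finite extensions. So the separability hypothesis in the present proposition is used only through Theorem~\ref{thm:Rsupp-Tr}, and I would make that dependence explicit. No case analysis, no dimension count, and no appeal to the bound $n\le m$ is needed.

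\begin{proof}
If $C$ admits a basis in $k^n$, then $\Res(C)=\mathrm{Rsupp}(C)$ by Proposition~\ref{k-bas-Res-Rsupp}, and $\mathrm{Rsupp}(C)=\Tr(C)$ by Theorem~\ref{thm:Rsupp-Tr} (using that $L/k$ is separable). Hence $\Res(C)=\Tr(C)$. Conversely, if $\Res(C)=\Tr(C)$, then $\Res(C)=\mathrm{Rsupp}(C)$ by Theorem~\ref{thm:Rsupp-Tr}, and Proposition~\ref{k-bas-Res-Rsupp} then shows that $C$ admits a basis in $k^n$.
\end{proof}
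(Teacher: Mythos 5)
Your proof is correct and is exactly the paper's argument: the authors likewise prove this by combining Proposition~\ref{k-bas-Res-Rsupp} with Theorem~\ref{thm:Rsupp-Tr}, using the latter to replace $\mathrm{Rsupp}(C)$ by $\Tr(C)$. Your additional remark that separability enters only through Theorem~\ref{thm:Rsupp-Tr} is a correct and worthwhile clarification.
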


\begin{proof}
This follows from  Proposition \ref{k-bas-Res-Rsupp}  combined with 
Theorem \ref{thm:Rsupp-Tr}.
\end{proof}

We denote by $\langle \_\,,\_ \rangle$ the standard bilinear form on $L^n$. Given any $L$-linear subspace $C$ of $L^n$,
we define its orthogonal as the linear subspace 
$$C^\perp =\{{\bf d}\in L^n \,|\, \forall  {\bf c}\in C, \, \langle{\bf c},{\bf d}\rangle\,=0\}\,.$$

\begin{lem}\label{perp-basis} %\odile{ gene de \cite[Lemma III 1]{Ducoat15}}
Let $L/k$ be any field extension and let $C$ be an $L$-linear subspace of $L^n$ for some $n\geq 1$. If $C$ admits a basis in $k^n$, then so does $C^\perp$.
\end{lem}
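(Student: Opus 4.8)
The plan is to exploit the fact that having a basis in $k^n$ is equivalent, via linear algebra over the ground field, to being defined by linear equations with coefficients in $k$. Concretely, suppose $C$ admits a basis ${\bf e}_1,\dots,{\bf e}_r$ in $k^n$. Assembling these as the rows of an $r\times n$ matrix $A\in M_{rn}(k)$, we have $C=\{{\bf c}\in L^n\mid A{\bf c}^{\mathsf t}=0\}$ is false — rather $C$ is the \emph{row space} of $A$ over $L$. The key observation is then that $C^\perp=\{{\bf d}\in L^n\mid A{\bf d}^{\mathsf t}=0\}$, i.e. $C^\perp$ is the solution space over $L$ of the homogeneous linear system with matrix $A$ having entries in $k$. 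Since $A$ has entries in $k$, this solution space is spanned by vectors with entries in $k$: indeed Gaussian elimination on $A$ produces a basis of $\ker A$ (as a map $k^n\to k^r$) consisting of vectors in $k^n$, and because field extension is flat (equivalently, because row-reduction commutes with the inclusion $k\hookrightarrow L$), these same vectors form an $L$-basis of the solution space over $L$, which is exactly $C^\perp$.

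First I would set up the matrix $A$ from the given $k$-basis of $C$ and record the identity $C^\perp=\ker(A\colon L^n\to L^r)$, which is immediate from the definition of $\langle\_\,,\_\rangle$ and the fact that the ${\bf e}_i$ span $C$ over $L$. Next I would invoke the elementary fact that for a matrix $A$ over $k$, one has $\ker(A\colon L^n\to L^r)=(\ker(A\colon k^n\to k^r))_L$; this is just $\ker(A)\otimes_k L=\ker(A\otimes_k L)$, valid since $L$ is flat (in fact free) over $k$, or alternatively can be seen directly by noting that the rank of $A$ is the same over $k$ and over $L$ (it is the size of the largest nonvanishing minor, and minors lie in $k$), so $\dim_L\ker(A\colon L^n\to L^r)=n-\rank A=\dim_k\ker(A\colon k^n\to k^r)$, while the inclusion $(\ker(A\colon k^n\to k^r))_L\subseteq\ker(A\colon L^n\to L^r)$ is obvious. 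Finally, a $k$-basis of $\ker(A\colon k^n\to k^r)$ is then an $L$-basis of $C^\perp$ lying in $k^n$, which is the assertion.

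There is no real obstacle here; the only point requiring a line of justification is the compatibility of kernels with base change, and that is standard (flatness of $L/k$, or the minor/rank argument). One stylistic choice to settle is whether to phrase the argument via tensor products and flatness or via the concrete rank-of-a-matrix computation; the latter keeps the proof self-contained within the elementary linear-algebra framework already in use in this section, so I would lean toward that. Note also that the hypothesis that $L/k$ be finite is not needed — the statement and proof work for any field extension, consistent with how the lemma is stated.
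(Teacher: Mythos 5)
Your proof is correct and is essentially the paper's own argument in matrix language: the paper likewise describes $C^\perp$ as the common solution set of the equations $\langle{\bf e}_i,\_\,\rangle=0$ with the ${\bf e}_i\in k^n$, notes the obvious inclusion of the $L$-extension of the $k$-solution space into $C^\perp$, and concludes by a dimension count. The only cosmetic difference is that the paper phrases the $k$-solution space as the orthogonal $D$ of $\Res(C)$ in $k^n$ rather than as $\ker(A\colon k^n\to k^r)$.
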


\begin{proof}  Let  ${\bf e}_1,\ldots,{\bf e}_r$ denote an $L$- basis of $C$ in $k^n$ and let 
 $D$ be the orthogonal of $\Res(C)$ relative to the standard bilinear form on $k^n$. Then $C^\perp$
is the intersection of the $L{\bf e}_i^\perp$ ($1\leq i\leq r$), and $D$ has the same description over $k$. Therefore 
$C^\perp\supset D_L$.
Since dimensions are the same we get $C^\perp=D_L$, so  $C^\perp$ comes from $k^n$.
\end{proof}

For finite separable extensions,  we know from Delsarte  \cite[Theorem 2]{Delsarte75}  that 
$\Res (C)^\perp=\Tr(C^\perp)$, for any $L$-linear subspace $C$ of $L^n.$ 
Using Theorem \ref{thm:Rsupp-Tr}, the following relation can be seen as a generalization of that equality to arbitrary finite extensions.

\begin{prop}\label{Rsupp-perp}
Let $L/k$ be a finite extension, let $n\geq 1$, and
let $C$ be an $L$-linear subspace of $L^n.$ 
Then  we have $$\Res (C)^\perp=\Rsupp(C^\perp).$$
\end{prop}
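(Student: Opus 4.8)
The plan is to prove, for \emph{every} $L$-linear subspace $C$ of $L^n$, the ``dual'' identity $\Rsupp(C)^\perp=\Res(C^\perp)$, and then to obtain the stated equality from it by a purely formal dualisation. This route requires no separability hypothesis (in the separable case the identity would of course also follow from Delsarte's theorem together with Theorem \ref{thm:Rsupp-Tr}, but the uniform argument below is cleaner).

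The first — and only computational — step treats a single vector. Fix a $k$-basis $\alpha_1,\dots,\alpha_m$ of $L$ and, for ${\bf c}\in L^n$, let ${\bf c}^{(1)},\dots,{\bf c}^{(m)}\in k^n$ be the rows of $M({\bf c})$, so that ${\bf c}=\sum_{i=1}^m\alpha_i{\bf c}^{(i)}$. For any ${\bf v}\in k^n$ I would expand
\[
\langle{\bf c},{\bf v}\rangle=\sum_{i=1}^m\alpha_i\,\langle{\bf c}^{(i)},{\bf v}\rangle,
\]
noting that each $\langle{\bf c}^{(i)},{\bf v}\rangle$ lies in $k$. Since $\alpha_1,\dots,\alpha_m$ are $k$-linearly independent, $\langle{\bf c},{\bf v}\rangle=0$ in $L$ holds if and only if $\langle{\bf c}^{(i)},{\bf v}\rangle=0$ in $k$ for every $i$, that is, if and only if ${\bf v}$ is orthogonal in $k^n$ to every row of $M({\bf c})$. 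This gives $\Rsupp({\bf c})^\perp=\{{\bf v}\in k^n\mid\langle{\bf c},{\bf v}\rangle=0\}$.

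Next I would globalise. Since by definition $\Rsupp(C)=\Span_k(\Rsupp({\bf c})\mid{\bf c}\in C)$, taking orthogonals in $k^n$ turns the span into an intersection, and the previous step yields
\[
\Rsupp(C)^\perp=\bigcap_{{\bf c}\in C}\Rsupp({\bf c})^\perp=\{{\bf v}\in k^n\mid\langle{\bf c},{\bf v}\rangle=0\ \text{for all}\ {\bf c}\in C\}=C^\perp\cap k^n=\Res(C^\perp).
\]
Finally, applying this identity to the $L$-linear subspace $C^\perp$ and using $(C^\perp)^\perp=C$ gives $\Rsupp(C^\perp)^\perp=\Res(C)$; taking orthogonals in $k^n$ once more and invoking $W^{\perp\perp}=W$ for a subspace $W\subseteq k^n$ produces $\Rsupp(C^\perp)=\Res(C)^\perp$, which is the assertion.

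I do not expect a genuine obstacle: all the content sits in the one-vector computation, and the rest is bookkeeping. The only point deserving a word of care is that both sides of the identity $\Rsupp(C)^\perp=\Res(C^\perp)$ are $k$-subspaces of $k^n$, so that the double orthogonal used in the last step is legitimate; this is guaranteed by nondegeneracy of the standard bilinear forms on $k^n$ and on $L^n$.
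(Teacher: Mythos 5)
Your proof is correct and rests on exactly the same computation as the paper's: expanding $\langle{\bf c},{\bf v}\rangle=\sum_i\alpha_i\langle{\bf c}^{(i)},{\bf v}\rangle$ for ${\bf v}\in k^n$, using $k$-linear independence of the $\alpha_i$, and then invoking double orthogonality. The only difference is organizational — you first establish the equivalent identity $\Rsupp(C)^\perp=\Res(C^\perp)$ and then dualize, whereas the paper proves the two inclusions $\Rsupp(C^\perp)\subset\Res(C)^\perp$ and $\Rsupp(C^\perp)^\perp\subset\Res(C)$ directly — so this is essentially the paper's argument in a slightly tidier packaging.
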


\begin{proof}
We first show that $\Rsupp(C^\perp)\subset \Res (C)^\perp$.
Take ${\bf d}\in C^\perp$ and ${\bf c}\in \Res( C)$. We  have ${\bf d}=\sum_{i=1}^m \alpha_i\,{\bf d}^{(i)}$, where
$\alpha_1,\ldots,\alpha_m$ is a $k$-basis of $L$ and the ${\bf d}^{(i)}$'s belong to $\Rsupp(C^\perp)$. We need
to prove that all $\langle {\bf d}^{(i)},{\bf c}\rangle$ are zero. 
Indeed we have $0=\,\langle{\bf d},{\bf c}\rangle\,=\sum_{j=1}^n\bigl(\sum_{i=1}^m\alpha_i{d_j}^{(i)}c_j\bigr)=
\sum_{i=1}^m\alpha_i\langle{\bf d}^{(i)},{\bf c}\rangle$.

It now suffices to prove that $\Rsupp(C^\perp)^\perp\subset \Res (C)$. Let ${\bf c}\in k^n$ belong to
$\Rsupp(C^\perp)^\perp$ and consider ${\bf d}\in  C^\perp$. As above, write ${\bf d}=\sum_{i=1}^m \alpha_i\,{\bf d}^{(i)}$, with all ${\bf d}^{(i)}$'s in $\Rsupp(C^\perp)$.
Then we get $\langle{\bf c},{\bf d}\rangle=\sum_{i=1}^m\alpha_i \langle{\bf c},{\bf d}^{(i)}\rangle=0$. Therefore ${\bf c}\in (C^\perp)
^\perp=C$. Since ${\bf c}\in k^n$ the proof is complete.
\end{proof}

\begin{rem}\label{Rsupp-Res-perp}
In particular, the above proposition allows to compute the rank support of $C$. Indeed, as $(C^\perp)
^\perp=C$, we get 
\[
\Rsupp(C)=\Res(C^\perp)^\perp.
\]
\end{rem}

Finally we recover and complete the characterization of {\sl rank-degenerate} codes given by Jurrius-Pellikaan in 
\cite[Corollary 6.3]{Jurrius17}. Recall that these are the $L$-linear subspaces of $L^n$ such that $d_R(C^\perp)=1$.

\begin{cor}\label{carac-nondeg}
Let $L/k$ be a finite extension and let $n\geq 1$. A linear code $C$ in $L^n$ is degenerate with respect to the rank metric if and only
if we have $\Rsupp(C)\neq k^n$, or equivalently, if and only if $\Res(C^\perp)\neq \{0\}$.
\end{cor}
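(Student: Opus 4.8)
The plan is to reduce the statement to two results already established above: statement~6 of Proposition~\ref{prop:rkbasic} and the identity recorded in Remark~\ref{Rsupp-Res-perp}. Recall that, by definition, $C$ is rank-degenerate precisely when $d_R(C^\perp)=1$. So the first step is to apply statement~6 of Proposition~\ref{prop:rkbasic} with $C$ replaced by $C^\perp$; this yields at once the equivalence $d_R(C^\perp)=1 \iff \Res(C^\perp)\neq\{0\}$, which is one of the two equivalences claimed.

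It then remains to show that $\Res(C^\perp)\neq\{0\}$ is equivalent to $\Rsupp(C)\neq k^n$. For this I would invoke Remark~\ref{Rsupp-Res-perp}, which (using $(C^\perp)^\perp=C$, valid for finite extensions) gives $\Rsupp(C)=\Res(C^\perp)^\perp$, the orthogonal being taken with respect to the standard bilinear form on $k^n$. Since for any $k$-subspace $D$ of $k^n$ one has $D^\perp=k^n$ if and only if $D=\{0\}$ (because $(D^\perp)^\perp=D$), it follows that $\Rsupp(C)=k^n$ if and only if $\Res(C^\perp)=\{0\}$, i.e. $\Rsupp(C)\neq k^n$ if and only if $\Res(C^\perp)\neq\{0\}$. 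Combining the two equivalences closes the argument.

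The only point requiring a little care is to keep track of the two different orthogonals in play — the one taken in $k^n$ (appearing in Remark~\ref{Rsupp-Res-perp}) versus the one taken in $L^n$ (appearing in the definition of $C^\perp$) — so as not to conflate them. Apart from that bookkeeping there is no genuine obstacle, since the corollary is a purely formal consequence of Proposition~\ref{prop:rkbasic}, Proposition~\ref{Rsupp-perp} and Remark~\ref{Rsupp-Res-perp}.
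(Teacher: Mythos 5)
Your proof is correct and follows exactly the paper's own argument: the paper's proof is the one-line combination of statement~6 of Proposition~\ref{prop:rkbasic} (applied to $C^\perp$) with Remark~\ref{Rsupp-Res-perp}, which is precisely what you carry out, just with the biduality step $\Rsupp(C)=k^n\iff\Res(C^\perp)=\{0\}$ made explicit. Nothing to add.
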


\begin{proof}
We apply  statement 6 of  Proposition \ref{prop:rkbasic} and combine it with 
 Remark \ref{Rsupp-Res-perp}.
\end{proof}

\section{The geometric viewpoint}\label{geom}

In this section, we prove the following theorem, which is our first main result.

\begin{thm}\label{thm:geneJuPe}
Let $L/k$ be a finite field extension of degree $m$ and let $C\subset L^n$ be a linear subspace. If $m\geq n$, then there exists ${\bf c}\in C$ such that 
\[
\mathrm{Rsupp}(C)=\mathrm{Rsupp}({\bf c}).
\]
\end{thm}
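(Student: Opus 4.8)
The plan is to reduce the general finite extension case to the situation handled by Proposition \ref{Thm-cas r<m}, namely the case of a code that admits a basis in $k^n$. The key geometric device is the Weil restriction of scalars. First I would set $R = \Rsupp(C) \subset k^n$ and $r = \wt_R(C) = \dim_k R$. By statement 8 of Proposition \ref{prop:rkbasic} we have $C \subset R_L$, and $R_L$ is an $L$-linear subspace of $L^n$ with a basis in $k^n$ of dimension $r$. The difficulty is that $R_L$ has dimension $r$ which need not be $\leq m$; however, $r = \dim_k R \leq n \leq m$, so in fact $R_L$ does satisfy the hypotheses of Proposition \ref{Thm-cas r<m}. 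That proposition produces a vector ${\bf c}_0 \in R_L$ with $\Rsupp({\bf c}_0) = \Rsupp(R_L) = R$. But ${\bf c}_0$ need not lie in $C$, so this is not yet enough: the real content is to find such a vector \emph{inside} $C$.

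The geometric heart of the argument is then as follows. Fix a $k$-basis $\alpha_1, \dots, \alpha_m$ of $L$ and a $k$-basis ${\bf e}_1, \dots, {\bf e}_r$ of $R$. Every ${\bf c} \in R_L$ is uniquely ${\bf c} = \sum_i \lambda_i {\bf e}_i$ with $\lambda_i \in L$, and writing $\lambda_i = \sum_j \lambda_{ij}\alpha_j$ identifies $R_L$ with $\mathbb{A}^{rm}$ over $k$ via the coordinates $(\lambda_{ij})$; this is the Weil restriction $\mathrm{Res}_{L/k}(R_L)$ viewed inside $\mathrm{Res}_{L/k}(\mathbb{A}^n_L) = \mathbb{A}^{nm}_k$. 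Under this identification, the condition $\wt_R({\bf c}) < r$, i.e.\ $\Rsupp({\bf c}) \subsetneq R$, is that the $m \times r$ matrix of the $\lambda_{ij}$ has rank $< r$ — equivalently that all $r \times r$ minors vanish — so the locus $Z = \{{\bf c} \in R_L : \wt_R({\bf c}) < r\}$ is a Zariski-closed subset of $R_L \cong \mathbb{A}^{rm}_k$, proper because ${\bf c}_0 \notin Z$. Meanwhile $C$, being $L$-linear inside $R_L$, corresponds under Weil restriction to a $k$-linear subspace $W = \mathrm{Res}_{L/k}(C) \subset \mathrm{Res}_{L/k}(R_L)$, which is a linear subvariety of $\mathbb{A}^{rm}_k$.

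The conclusion I would aim for is that $W \not\subset Z$: then any $k$-point of $W \setminus Z$ is a vector ${\bf c} \in C$ with $\Rsupp({\bf c}) = R = \Rsupp(C)$, since $\Rsupp({\bf c}) \subset \Rsupp(C) = R$ always and $\wt_R({\bf c}) \geq r$ forces equality by dimension. To see $W \not\subset Z$, note that $C$ spans $R_L$ as an $L$-vector space (by definition of $R = \Rsupp(C)$ together with statement 4 of Proposition \ref{prop:rkbasic} and statement 8), hence $W$ spans $\mathrm{Res}_{L/k}(R_L) = \mathbb{A}^{rm}_k$ as a $k$-vector space — i.e.\ $W$ is not contained in any proper $k$-linear subspace. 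One then checks that $Z$, although not itself linear, cannot contain a linear subspace that spans $\mathbb{A}^{rm}_k$: the vanishing of all $r\times r$ minors of the coordinate matrix cuts out a variety whose irreducible components each lie in a proper linear subspace, or more directly, if a linear subspace $W$ lies in $Z$ then picking generic points of $W$ would produce a full-rank matrix unless $W$ is proper. The last point is where I expect the main technical work to lie, and the cleanest route is probably to argue over $\bar k$ using that a determinantal variety of this shape contains no linear subspace of the ambient dimension, or to exploit the explicit description of ${\bf c}_0$ together with a genericity/dimension count. Once $W \not\subset Z$ is established, since $k$ may be finite one must ensure a \emph{rational} point exists in $W \setminus Z$; but $W \setminus Z$ is a nonempty open subset of an affine space over $k$ defined by polynomials with coefficients in $k$, and a nonempty Zariski-open subset of $\mathbb{A}^{N}_k$ that is the complement of a proper closed set always has a $k$-point when... — here one invokes that $W$ itself is an affine space $\mathbb{A}^{\dim_k C \cdot m}_k$ and $Z \cap W$ is a proper closed subset of it, and a proper closed subset of affine space over any field cannot be everything, so $(W\setminus Z)(k) \neq \emptyset$. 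This rationality point is the second place requiring care, and it is handled by the observation that $W \cong \mathbb{A}^{m\dim_k C}_k$ as a $k$-variety with $Z\cap W$ a proper Zariski-closed subset, which always has a $k$-rational point in its complement since a polynomial cannot vanish identically on $k^N$ unless it is the zero polynomial — false in general for finite $k$, so in fact one argues instead that $W \not\subset Z$ geometrically and then produces the rational point by a more hands-on construction, e.g.\ exhibiting ${\bf c}$ explicitly as a suitable $L$-combination of a spanning set of $C$.
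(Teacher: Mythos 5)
The reduction to $R_L=\Rsupp(C)_L$ and the translation into Weil restrictions are sound and close in spirit to the paper's argument, but the key step --- $W=\mathcal{R}_{L/k}(C)\not\subset Z$ --- is exactly where your proof breaks down. Your justification rests on the claim that ``$C$ spans $R_L$ as an $L$-vector space, hence $W$ spans $\mathbb{A}^{rm}_k$.'' This is false outside the degenerate case: $C$ is itself an $L$-subspace, so its $L$-span is $C$, and $C=R_L$ holds if and only if $\dim_L C=\wt_R(C)$, i.e.\ if and only if $C$ is extended from $k^n$ --- precisely the case already settled by Proposition \ref{Thm-cas r<m} with no geometry needed. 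In the remaining (and only interesting) case $\dim_L C<r$, the subspace $W$ has $k$-dimension $m\dim_L C<rm$ and spans nothing. Worse, the fallback claim that the determinantal locus $Z$ ``contains no large linear subspace'' cannot be repaired by a dimension count: $Z$ contains, for instance, the $L$-subspace $\Span_L(\mathbf{e}_1,\dots,\mathbf{e}_{r-1})$, of $k$-dimension $(r-1)m$, which typically exceeds $\dim_k W$. The statement you would actually need --- that no $L$-subspace $D\subset R_L$ with $\Rsupp(D)=R$ lies inside $Z$ --- is a verbatim restatement of the theorem, so the argument is circular at its core. Note also that the hypothesis $m\geq n$ enters your argument only through the existence of $\mathbf{c}_0$, i.e.\ through $Z\subsetneq R_L$, which is far from sufficient: a proper closed $Z$ can perfectly well contain a given linear subvariety $W$.

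The second, acknowledged, gap --- producing a $k$-rational point of $W\setminus Z$ when $k$ is finite --- is also not closed: you correctly observe that a nonempty Zariski-open subset of $\mathbb{A}^N_k$ need not have a $k$-point over a finite field, and then defer to an unspecified ``hands-on construction.'' The paper avoids both problems by a different organization: an induction on $\dim_L C$ (using statement 4 of Proposition \ref{prop:rkbasic} to replace an $(r-1)$-dimensional summand by a single vector of the same rank support) reduces everything to the case where $C$ is a plane; for a plane one builds a morphism $f:\mathbb{P}^{n-1}_k\to\mathcal{R}_{L/k}\mathbb{P}(C)$ from the space of linear forms, whose image has dimension at most $n-1<m=\dim\mathcal{R}_{L/k}\mathbb{P}(C)$, so that $f$ is not dominant; and the finite-field case is settled by the explicit point count $\frac{\vert k\vert^n-1}{\vert k\vert-1}<\frac{\vert L\vert^2-1}{\vert L\vert-1}$. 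If you wish to keep your determinantal formulation, you must supply both a genuine proof that $W\not\subset Z$ which actually exploits $m\geq n$, and a quantitative comparison of $\vert (Z\cap W)(k)\vert$ with $\vert W(k)\vert$; the paper's reduction to planes is what makes both steps tractable.
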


We start with a few observations allowing to reduce the problem to an algebro-geometric problem. First, we note that we can suppose that $\mathrm{Rsupp}(C)=k^n$. Indeed, by Proposition \ref{prop:rkbasic}, we have $C\subset \Rsupp(C)_L$. We may therefore choose a basis of $\Rsupp(C)$ and work with this vector space instead of $k^n$.

\begin{lem}\label{lem:supp}
Let $D\subset L^n$ be a $L$-linear subspace, and let $\varphi:k^n\to k$ be a linear map. Then 
\[
\varphi(\Rsupp(D))_L=\varphi_L(D).
\]
\end{lem}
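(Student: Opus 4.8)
The plan is to unwind both sides of the claimed identity $\varphi(\Rsupp(D))_L=\varphi_L(D)$ through the matrix description $M(\mathbf{c})$, using that $\varphi$ acts on $k^n$ by a row vector and that the rows of $M(\mathbf{c})$ generate $\Rsupp(\mathbf{c})$. Fix a $k$-basis $\alpha_1,\dots,\alpha_m$ of $L$ as in the paper, so that any $\mathbf{c}\in L^n$ writes $\mathbf{c}=\sum_{i=1}^m \alpha_i \mathbf{c}^{(i)}$ with $\mathbf{c}^{(i)}\in k^n$ the rows of $M(\mathbf{c})$. I would first treat the case of a single vector, i.e.\ $D=L\mathbf{c}$, since Proposition \ref{prop:rkbasic}(4) reduces the general case to this one: both sides of the identity are additive in $D$ in the sense that $\varphi(\Rsupp(D))_L$ is the $L$-span of the $\varphi(\Rsupp(\mathbf{c}_j))_L$ for generators $\mathbf{c}_j$ of $D$, and likewise $\varphi_L(D)$ is the $L$-span of the $\varphi_L(L\mathbf{c}_j)=\varphi(L\mathbf{c}_j)$.

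For the single-vector case, observe that $\varphi_L(\mathbf{c})=\varphi_L\bigl(\sum_i \alpha_i\mathbf{c}^{(i)}\bigr)=\sum_i \alpha_i\,\varphi(\mathbf{c}^{(i)})$, because $\varphi_L$ is $L$-linear and restricts to $\varphi$ on $k^n$. Since $\Rsupp(\mathbf{c})=\Span_k(\mathbf{c}^{(1)},\dots,\mathbf{c}^{(m)})$, we get $\varphi(\Rsupp(\mathbf{c}))=\Span_k(\varphi(\mathbf{c}^{(1)}),\dots,\varphi(\mathbf{c}^{(m)}))$, a $k$-subspace of $k$, hence either $\{0\}$ or all of $k$. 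Therefore $\varphi(\Rsupp(\mathbf{c}))_L$ is either $\{0\}$ or $L$. On the other side $\varphi_L(L\mathbf{c})=L\cdot\varphi_L(\mathbf{c})=L\cdot\sum_i\alpha_i\varphi(\mathbf{c}^{(i)})$, which is $\{0\}$ exactly when all $\varphi(\mathbf{c}^{(i)})=0$, i.e.\ exactly when $\varphi(\Rsupp(\mathbf{c}))=\{0\}$, and is $L$ otherwise (any nonzero element of $L$ generates $L$ as an $L$-module). So the two sides agree in the rank-one case.

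Finally I would assemble the general statement: write $D=\sum_j L\mathbf{c}_j$; then $\Rsupp(D)=\sum_j\Rsupp(\mathbf{c}_j)$ by Proposition \ref{prop:rkbasic}(4), so $\varphi(\Rsupp(D))=\sum_j\varphi(\Rsupp(\mathbf{c}_j))$ and, extending scalars, $\varphi(\Rsupp(D))_L=\sum_j\varphi(\Rsupp(\mathbf{c}_j))_L=\sum_j\varphi_L(L\mathbf{c}_j)$ by the rank-one case. On the other hand $\varphi_L(D)=\varphi_L\bigl(\sum_j L\mathbf{c}_j\bigr)=\sum_j\varphi_L(L\mathbf{c}_j)$ by $L$-linearity of $\varphi_L$. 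Comparing the two gives the result. The only mild subtlety — and the step I would be most careful about — is the interchange of $\varphi(-)$ with the $(-)_L$ extension of scalars: for a $k$-subspace $W\subseteq k^n$ one has $\varphi(W)_L=\varphi_L(W_L)$ because $-\otimes_k L$ is exact and $\varphi_L$ is literally $\varphi\otimes_k L$; making this explicit (rather than just invoking exactness) keeps the argument self-contained, and it is essentially the observation that for subspaces of $k$ the functor $(-)_L$ just sends $\{0\}\mapsto\{0\}$ and $k\mapsto L$.
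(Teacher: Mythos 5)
Your proof is correct and rests on the same two observations as the paper's: both sides are $L$-subspaces of $L$, hence each is $\{0\}$ or all of $L$, and the expansion $\varphi_L(\mathbf{c})=\sum_i\alpha_i\varphi(\mathbf{c}^{(i)})$ together with linear independence of the $\alpha_i$ over $k$ shows they vanish simultaneously. The only difference is organizational --- you reduce to the rank-one case via Proposition \ref{prop:rkbasic}(4) and then reassemble, whereas the paper handles $D$ directly by first noting $\varphi_L(D)\subset\varphi(\Rsupp(D))_L$ from $D\subset\Rsupp(D)_L$ --- so this is essentially the paper's argument.
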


\begin{proof}
As $D\subset \Rsupp(D)_L$, we have $\varphi_L(D)\subset \varphi(\mathrm{Rsupp}(D))_L$. Now, both terms are $L$-linear subspaces of $L$ and it suffices to prove that if $\varphi_L(D)=0$, then $\varphi(\mathrm{Rsupp}(D))_L=0$. Suppose then that $\varphi_L(D)=0$ and let ${\bf d}\in D$. We can write as before
\[
{\bf d}=\sum_{i=1}^m \alpha_i{\bf d}^{(i)}\in L^n,
\]
with ${\bf d}^{(i)}\in \mathrm{Rsupp}(D)$. By definition, 
\[
\varphi_L({\bf d})=\sum_{i=1}^m \alpha_i \varphi({\bf d}^{(i)})
\]
and linear independence implies that $\varphi({\bf d}^{(i)})=0$ for any $i=1,\ldots,m$. As $\mathrm{Rsupp}(D)$ is generated by the ${\bf d}^{(i)}$ for ${\bf d}\in D$, it follows that $\varphi(\mathrm{Rsupp}(D))_L=0$.
\end{proof}

\begin{rem}\label{rem:zerozero}
As a corollary, we see that $\varphi(\mathrm{Rsupp}(D))=0$ if and only if $\varphi_L(D)=0$, i.e. that 
\[
\mathrm{Rsupp}(D)\subset \ker \varphi \iff D\subset \ker \varphi_L.
\]
In case $\Rsupp(D)=k^n$, this yields that $\varphi=0$ if and only if $\varphi_L(D)=0$.
\end{rem}

\begin{prop}\label{prop:reformulation}
Let $L/k$ be a finite field extension of dimension $m$ and let $n\leq m$. Let $C\subset L^n$ be a plane with $\mathrm{Rsupp}(C)=k^n$. Then there exists ${\bf v}\in C$ such that  
\[
\varphi_L({\bf v})=0 \iff \varphi=0
\]
for any $k$-linear map $\varphi:k^n\to k$.
\end{prop}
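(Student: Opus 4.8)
I would begin by reinterpreting the conclusion as the existence of a line of the plane $C$ with full rank support, and produce such a line by a geometric dimension count. Fix an $L$-basis ${\bf a},{\bf b}$ of $C$. By Remark \ref{rem:zerozero} applied to $D=L{\bf v}$, the condition ``$\varphi_L({\bf v})=0\iff\varphi=0$ for all $k$-linear $\varphi\colon k^n\to k$'' is equivalent to $\mathrm{Rsupp}({\bf v})=k^n$, i.e.\ to $\wt_R({\bf v})=n$; so I must exhibit ${\bf v}\in C$ of rank weight $n$. Moreover, since $\mathrm{Rsupp}(C)=k^n$, Proposition \ref{Rsupp-perp} gives $\Res(C^\perp)=\mathrm{Rsupp}(C)^\perp=\{0\}$. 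Consequently, for every $\varphi\in k^n\setminus\{0\}$ the form $\varphi_L|_C\colon C\to L$ is nonzero, its kernel is a single $L$-line of $C$ — concretely the point $[\langle{\bf b},\varphi\rangle:-\langle{\bf a},\varphi\rangle]$ of $\mathbb{P}(C)\cong\mathbb{P}^1_L$ — and ${\bf v}\in C\setminus\{0\}$ fails the required property exactly when $L{\bf v}$ is one of these ``bad'' lines.

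Next I would pass to the Weil restriction $W$ of $\mathbb{P}(C)\cong\mathbb{P}^1_L$ along $L/k$: it is a geometrically irreducible, $k$-rational, projective variety of dimension $m=[L:k]$ (it contains $\mathrm{R}_{L/k}(\mathbb{A}^1_L)\cong\mathbb{A}^m_k$ as a dense open), and by adjunction $W(k)=\mathbb{P}(C)(L)$ is the set of lines of $C$. The $k$-linear map $(k^n)^*\to L^2$, $\varphi\mapsto(\langle{\bf a},\varphi\rangle,\langle{\bf b},\varphi\rangle)$, induces — after base change to $L$, projectivization, and adjunction — a $k$-rational map $f\colon\mathbb{P}^{n-1}_k\dashrightarrow W$ sending $[\varphi]$ to $\ker(\varphi_L|_C)$, whose indeterminacy locus has no $k$-rational point (again because $\Res(C^\perp)=\{0\}$); in particular $f$ is defined at every $k$-point of $\mathbb{P}^{n-1}_k$. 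Let $Z\subseteq W$ be the Zariski closure of the image of $f$. Since $Z$ is dominated by $\mathbb{P}^{n-1}_k$, we have $\dim Z\le n-1<m=\dim W$, the strict inequality being precisely the hypothesis $n\le m$; hence $Z$ is a proper closed subvariety of the irreducible variety $W$.

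It then suffices to find a $k$-point of $W$ lying outside $Z$: by the first step every bad line is a $k$-point of $Z$, so any such point is the line $L{\bf v}$ of a vector ${\bf v}$ with the required property. If $k$ is infinite this is immediate, because $W$ is $k$-rational, whence $W(k)$ is Zariski dense in $W$ and cannot be contained in the proper closed subset $Z$. If $k$ is finite with $q$ elements one argues directly by counting, without $Z$: the bad lines number at most $|\mathbb{P}^{n-1}(k)|=(q^n-1)/(q-1)$, whereas $C$ has $|\mathbb{P}^1(L)|=q^m+1$ lines, and $(q^n-1)/(q-1)<q^m+1$ exactly because $n\le m$. In either case a suitable ${\bf v}$ exists, which proves the proposition.

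The step I expect to cost the most work is the construction of the rational map $f$ together with the bound $\dim Z\le n-1$. One has to check that the pointwise recipe $\varphi\mapsto\ker(\varphi_L|_C)$ genuinely descends to a $k$-rational map into the Weil restriction $W$ — this is where it is essential that $C$ is a \emph{plane}, so that $\ker(\varphi_L|_C)$ is a single point of $\mathbb{P}(C)$, whereas for $\dim_L C\ge 3$ it would be a positive-dimensional linear subspace and the approach would have to be modified — and that $f$ hits all the bad $k$-points of $W$, which rests on $\Res(C^\perp)=\{0\}$. The background facts about Weil restriction ($\dim W=m$, irreducibility and $k$-rationality of $\mathrm{R}_{L/k}\mathbb{P}^1_L$, and the identification of $W(k)$) I would simply quote, and the concluding split between finite and infinite base fields is routine once $Z\subsetneq W$ is in hand.
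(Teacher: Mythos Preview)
Your proof is correct and follows essentially the same route as the paper: construct a map from $\mathbb{P}^{n-1}_k$ into the Weil restriction $W=\mathrm{R}_{L/k}\mathbb{P}(C)$ sending a linear form to the line it cuts out on $C$, bound the dimension of the closure of its image by $n-1<m$, and then find a $k$-point outside via density (infinite $k$) or a direct count (finite $k$). The only notable difference is that the paper works a bit harder to build $f$ as an honest morphism $\mathbb{P}^{n-1}_k\to W$, by checking via the functor of points that $P_L\cap (C\otimes_L R_L)$ is locally free of rank one over every $k$-algebra $R$; you instead take $f$ as a rational map and observe (using $\Res(C^\perp)=\{0\}$, which you obtain from Proposition~\ref{Rsupp-perp} rather than directly from Remark~\ref{rem:zerozero}) that it is defined at every $k$-point, which is all that is needed for the argument.
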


The proof of Proposition \ref{prop:reformulation} will be of geometric nature. Before starting with it, we first recall a few facts about the Weil restriction of quasi-projective schemes. If $X$ is a quasi-projective scheme over $L$, then it Weil restriction  $\mathcal{R}_{L/k}X$ along the field extension $L/k$ is the quasi-projective scheme over $k$ satisfying 
\[
\mathrm{Hom}_k(\mathrm{Spec}(R),\mathcal{R}_{L/k}X)=\mathrm{Hom}_L(\mathrm{Spec}(R_L),X).
\]
for any $k$-algebra $R$ (its existence is given by \cite[\S 7.6]{Bosch90} and \cite[Proposition A.5.8]{Conrad15}).
For instance, if $X=\mathbb{A}^1_L$, then $\mathcal{R}_{L/k}\mathbb{A}^1_L\simeq \mathbb{A}[\mathrm{Hom}_k(L,k)]\simeq \mathbb{A}_k^m$ as a direct computation shows (or \cite[\S 4.6]{Poonen17}). In particular, we see that the Krull dimension of $\mathcal{R}_{L/k}\mathbb{A}^1_L$ is $m$. If $X=\mathbb{P}_L^1$, then \cite[Proposition A.5.9]{Conrad15} shows that $\mathcal{R}_{L/k}\mathbb{P}_L^1$ is connected and \cite[Proposition A.5.2]{Conrad15} implies that an open immersion $\mathbb{A}^{1}_L\subset \mathbb{P}_L^1$ induces an open immersion $\mathcal{R}_{L/k}\mathbb{A}^{1}_L\subset \mathcal{R}_{L/k}\mathbb{P}_L^1$. This shows that $\mathcal{R}_{L/k}\mathbb{P}_L^1$ is of dimension $m$. More generally, one can prove that the Weil restriction $\mathcal{R}_{L/k}X$ of a smooth quasi-projective scheme $X$ of (pure) dimension $d$ is of dimension $dm$ but we don't use this fact here.

The main ingredient of the proof of Proposition \ref{prop:reformulation} is a morphism of schemes
\[
f:\mathbb{P}_k^{n-1}\to \mathcal{R}_{L/k}\mathbb{P}(C)\simeq \mathcal{R}_{L/k}\mathbb{P}^1_L
\]
that we now define. Let $R$ be a finitely generated $k$-algebra. Recall from \cite[Chapitre 4]{EGAII} that the set of $R$ points of $\mathbb{P}^{n-1}_k$ is identified with a line bundle $M$ over $R$ and an equivalence class of surjections 
\[
\varphi:R^n\to M
\]
under the natural action of $R^\times$. To such a surjection, we can associate its kernel $\ker \varphi$, which is a projective $R$-module $P$ of rank $n-1$. Extending scalars to $L$, we obtain a projective $R_L$-module $P_L$ which is a submodule of $R^n_L$. We can then consider the intersection $P_L\cap (C\otimes_L R_L)\subset C\otimes_L R_L$ where $C$ is the plane in the problem.

\begin{lem}
The submodule $P_L\cap (C\otimes_L R_L)$ is locally free of rank $1$.
\end{lem}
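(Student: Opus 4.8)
The plan is to check local freeness and the rank of $P_L\cap(C\otimes_L R_L)$ by a local computation after localizing $R$ sufficiently, using that $P$ is a direct summand of $R^n$ of rank $n-1$ and that $C$ is a line in $L^n$ (a ``plane'' in the projective sense, i.e. $\dim_L C=2$, giving $\mathbb{P}(C)\simeq\mathbb{P}^1_L$). Wait --- here $\mathbb{P}(C)\simeq\mathbb{P}^1_L$, so $\dim_L C=2$; let me treat $C$ as a rank-$2$ $L$-submodule of $L^n$. First I would reduce to the case where $R$ is local: being locally free of a given rank is a local property on $\mathrm{Spec}(R)$, and $R_L$ is a finite free $R$-module, so $\mathrm{Spec}(R_L)\to\mathrm{Spec}(R)$ is finite; localizing $R$ at a prime, we may assume $R$ is local, hence $P$ is free of rank $n-1$ and $M$ is free of rank $1$, so the surjection $\varphi\colon R^n\to M$ splits and $R^n\cong P\oplus M$.

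Next I would extend scalars to $R_L$, which is a semilocal ring (finite free over the local ring $R$), and analyze the submodule $P_L\cap (C\otimes_L R_L)$ of $R_L^n$. The key point is that $C\otimes_L R_L$ is a free direct summand of $R_L^n$ of rank $2$ (since $C$ is a rank-$2$ direct summand of $L^n$), and $P_L$ is a free direct summand of rank $n-1$. The intersection of two direct summands need not be a summand in general, so the crucial input is a dimension count: over any residue field $\kappa$ of $R_L$, the images $\overline{P_L}$ and $\overline{C\otimes_L R_L}$ have dimensions $n-1$ and $2$ inside $\kappa^n$, so their intersection has dimension $\geq 1$. To get exactly $1$ and to get local freeness, I would use the hypothesis $\mathrm{Rsupp}(C)=k^n$: this forces, via Remark \ref{rem:zerozero}, that the only linear form $\varphi\colon k^n\to k$ with $\varphi_L(C)=0$ is $\varphi=0$; combined with $n\leq m$ and Lemma \ref{lem:supp}, this prevents $C$ from lying inside too many coordinate-type hyperplanes, which is what pins the generic intersection dimension to $1$. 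Concretely, one checks the quotient $R_L^n/(P_L+(C\otimes_L R_L))$ is locally free (it is a quotient of $M_L$, hence of rank $\leq 1$, and the $\mathrm{Rsupp}$ condition bounds it below appropriately), whence by the exact sequence
\[
0\to P_L\cap(C\otimes_L R_L)\to P_L\oplus(C\otimes_L R_L)\to R_L^n/\bigl(\text{something}\bigr)
\]
and additivity of ranks one extracts that the intersection has constant rank $1$ and is a direct summand, hence locally free of rank $1$.

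The main obstacle I anticipate is precisely controlling the intersection when the two summands are in ``special position'' --- a priori the dimension of $\overline{P_L}\cap\overline{C}$ over a residue field could jump above $1$, which would break both the rank claim and local freeness. This is where the hypothesis $\mathrm{Rsupp}(C)=k^n$ together with $n\le m$ must be used in an essential way: it should be reinterpreted as saying that no nonzero $R$-linear functional on $R^n$ kills all of $C\otimes_L R_L$ after base change, equivalently that the composite $C\otimes_L R_L\hookrightarrow R_L^n\twoheadrightarrow M_L$ (induced by $\varphi$) is surjective --- which by rank count means its kernel, namely $P_L\cap(C\otimes_L R_L)$, is free of rank $1$ locally. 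Verifying that this composite is indeed surjective, using Remark \ref{rem:zerozero} applied fibrewise together with the bound $n\leq m$ (so that $\mathrm{Hom}_k(L,k)$ has enough room), is the heart of the argument; the rest is the standard fact that the kernel of a surjection between finite projective modules over a (semi)local ring is again finite projective of the complementary rank.
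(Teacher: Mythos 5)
Your argument is essentially the paper's: both reduce to checking, fibre by fibre over the closed points of $\mathrm{Spec}(R_L)$, that the intersection of the $(n-1)$-plane $P$ with the $2$-plane $C$ has dimension exactly $1$, the lower bound coming from the count $(n-1)+2-n=1$ and the upper bound from Remark \ref{rem:zerozero} together with $\mathrm{Rsupp}(C)=k^n$ (your reformulation as surjectivity of the composite $C\otimes_L R_L\to M_L$, followed by the kernel-of-a-surjection-of-projectives fact, is just a slightly cleaner packaging of the paper's bare fibre-rank computation). One small correction: the hypothesis $n\le m$ plays no role in this lemma --- it is used only later, in the dimension count showing that $f$ is not surjective on rational points --- so you should not invoke it here.
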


\begin{proof}
It suffices to prove that for any maximal ideal $\mathfrak m$ of $R_L$, the module is free of rank one after tensoring with $L(\mathfrak m)=R_L/\mathfrak m$. Now, $L(\mathfrak m)$ is a finite field extension of $L$, and thus of $k$. By construction, $P_{L(\mathfrak m)}=\ker \varphi_{L(\mathfrak m)}$ and we have to compute the dimension of $P_{L(\mathfrak m)}\cap C_{L(\mathfrak m)}$. As $P_{L(\mathfrak m)}$ is of dimension $n-1$ and $C_{L(\mathfrak m)}$ of dimension $2$, the intersection can't be trivial. Suppose that $P_{L(\mathfrak m)}\cap C_{L(\mathfrak m)}=C_{L(\mathfrak m)}$, i.e. that $C_{L(\mathfrak m)}\subset P_{L(\mathfrak m)}$. This means that $\varphi_{L(\mathfrak m)}(C_{L(\mathfrak m)})=0$ and then that $\varphi_L(C)=0$. By Remark \ref{rem:zerozero} this implies that $\varphi(\mathrm{Rsupp}(C))=0$, i.e. that $\varphi=0$, which is excluded. Thus, $P_L\cap (C\otimes_L R_L)$ is locally free of rank $1$.
\end{proof}

The lemma shows that the map
\[
f:\mathbb{P}_k^{n-1}\to \mathcal{R}_{L/k}\mathbb{P}(C)
\]
is well-defined. To make it more explicit, we compute the image of a $k$-point $[a_1:\ldots:a_n]$ of $\mathbb{P}_k^{n-1}$. A representative of the equivalence class of linear maps $\varphi:k^n\to k$ associated to this point is
\[
{\bf a}:k^n\to k
\]
given by $(u_1,\ldots,u_n)\mapsto \sum_{i=1}^na_iu_i$. If $C=\Span ({\bf v},{\bf w})\subset L^n$ with  ${\bf v}=(v_1,\ldots,v_n)$ and ${\bf w}=(w_1,\ldots,w_n)$ is such that $\mathrm{RSupp}(C)=k^n$, we see that an element of the form $\alpha {\bf v}+\mu {\bf w}$ is in the kernel of $a_L$ %%Odile
if and only if $\alpha(\sum a_iv_i)+\beta (\sum a_iw_i)=0$. It follows that 
\[
f([a_1:\ldots:a_n])=[\sum a_iw_i:-\sum a_iv_i].
\]

We are now ready to prove Proposition \ref{prop:reformulation}. We observe first that a rational point of $\mathcal{R}_{L/k}\mathbb{P}(C)$ corresponds to a $L$-point of $\mathbb P(C)$, i.e. to a $L$-linear subspace $L\cdot {\bf v}$ of $C$, where ${\bf v}\in C$ is non trivial. This rational point is in the image if and only if there exists a surjection $\varphi:k^n\to k$ whose kernel $P$ has the property that $P_L\cap C=L\cdot {\bf v}$, i.e. $\varphi_L({\bf v})=0$ and $\varphi_L(C)\neq 0$. In view of Lemma \ref{lem:supp}, we see that a rational point $L\cdot {\bf v}\in\mathcal{R}_{L/k}\mathbb{P}(C)$ is in the image of $f$ if and only if there exists a surjection $\varphi:k^n\to k$ such that $\varphi_L({\bf v})=0$. Thus, we see that a rational point of $\mathcal{R}_{L/k}\mathbb{P}(C)$ not in the image of $f$ satisfies the conclusion of Proposition \ref{prop:reformulation} and it suffices to show that the morphism $f$ is not surjective on $k$-points. Let $Z\subset \mathcal{R}_{L/k}\mathbb{P}(C)$ be the closure of the image of $f$. It is of dimension at most the dimension of $\mathbb{P}_k^{n-1}$, i.e. at most $n-1$, while $\mathcal{R}_{L/k}\mathbb{P}(C)$ is of dimension $m=[L:k]$. As $m\geq n$, we see that $Z$ is a proper closed subset of $\mathcal{R}_{L/k}\mathbb{P}(C)$. If $k$ is infinite, it follows that there is a rational point in the open complement of $Z$ and the result is proved. If $k$ is finite, we observe that $\mathbb{P}^{n-1}_k$ has $\frac {\vert k\vert^n-1}{\vert k\vert-1}$ rational points. On the other hand, the ($k$-)rational points of $\mathcal{R}_{L/k}\mathbb{P}(C)$ are in bijection with the ($L$)-rational points of $\mathbb{P}^1_L$ and it follows that there are $\frac {\vert L\vert^2-1}{\vert L\vert-1}$ such points.  We conclude using the fact that  
\[
\frac {\vert k\vert^n-1}{\vert k\vert-1}<\frac {\vert L\vert^2-1}{\vert L\vert-1}.
\]
\qed

We now proceed with the proof of theorem \ref{thm:geneJuPe}.

\begin{proof}[proof of Theorem \ref{thm:geneJuPe}]
If $C$ is of dimension $\leq 1$, then there is nothing to do. Suppose then that $n\geq 2$ and that $C\subset L^n$ is a plane. As usual, we may assume that $\mathrm{Rsupp}(C)=k^n$. By Proposition \ref{prop:reformulation}, there exists ${\bf c}\in C$ such that ${\bf c}\in \ker(\varphi_L)$ if and only if $\varphi=0$. If $\mathrm{Rsupp}({\bf c})\neq k^n$, there exists a non trivial linear form $\psi:k^n\to k$ such that $\psi(\mathrm{Rsupp}({\bf c}))=0$. By Lemma \ref{lem:supp}, this forces $\psi({\bf c})=0$ which contradicts the assumption on ${\bf c}$.  %ajout Odile:
So the theorem %corollary 
holds in this case.

To prove the % corr Odile
general case, we now work by induction on the dimension $2\leq r\leq n$ of $C$. We can write $C=D\oplus L{\bf e}$ for some $D$ of dimension $r-1$ and some  nontrivial ${\bf e}$. By induction, there exists ${\bf d}\in D$ such that $\mathrm{Rsupp}({\bf d})=\mathrm{Rsupp}(D)$. Now, $\mathrm{Rsupp}(C)$ is spanned by $\mathrm{RSupp}(D)$ and $\mathrm{Rsupp}({\bf e})$, i.e. by  $\mathrm{Rsupp}({\bf d})$ and $\mathrm{Rsupp}({\bf e})$. 
By Proposition \ref{prop:rkbasic} 
we are thus reduced to considering $C^\prime=\Span_L({\bf d},{\bf e})$ and we can conclude using the case of a plane proved above.
\end{proof}

\begin{rem}\label{<=m nec} %ajout Odile
%In the next section, we will see in   that the existence of such an element ${\bf c}$ implies that  $\dim C\leq m$ (recall from  %Proposition \ref{Thm-cas r<m}  that % if $\dim C\leq m$,
%in that case such an element ${\bf c}$ does exist if $C$ has an  $L$-basis in $k^n$.)  
In particular, taking $C=L^n$ shows that the condition $m\geq n$ is necessary
in order for such an element  ${\bf c}\in C$  to exist  {\bf for any} subspace $C$ of $L^n$ 
(indeed $M({\bf c})$ have $m$ rows); see also 
Remark \ref{CNdim}. 
% On the other hand, $\mathrm{dim}(\mathrm{Rsupp}(L^n))=n$. 
%See also our Proposition \ref{thm nondeg} below.
\end{rem}

Finally, recall from Corollary \ref{carac-nondeg} %\cite[Corollary 6.3]{Jurrius17}  
that  a code $C$ in $L^n$ is \emph{nondegenerate}  with 
respect to the rank metric  if and only if  
$\mathrm{Rsupp}(C)=k^n$. 
% from  Remark \ref{Rsupp-Res-perp} that this happens if and only if  $\Res (C^\perp)=\{0\}$.
 Theorem \ref{thm:geneJuPe} thus yields the following equivalence :

\begin{prop}\label{thm nondeg}
 Let $L/k$ be an extension of degree $m$, and  let $C$ be  an $L$-linear code of length $n$ which is nondegenerate
with respect to the rank metric. Then there is a ${\bf c}\in C$ such that $\mathrm{Rsupp}(C)=\mathrm{Rsupp}({\bf c})$
if and only if $m\geq n$.
\end{prop}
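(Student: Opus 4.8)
The plan is to deduce Proposition \ref{thm nondeg} directly from Theorem \ref{thm:geneJuPe} and Corollary \ref{carac-nondeg}, treating the two implications separately. First I would recall that by Corollary \ref{carac-nondeg}, the code $C$ being nondegenerate with respect to the rank metric is equivalent to $\mathrm{Rsupp}(C)=k^n$; this identification is what makes the statement tractable, since it pins down the rank support exactly rather than leaving it as an arbitrary subspace.

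For the implication ($m\geq n \Rightarrow$ existence of ${\bf c}$), there is essentially nothing left to prove: Theorem \ref{thm:geneJuPe} asserts precisely that when $m\geq n$, for \emph{any} $L$-linear subspace $C\subset L^n$ (in particular our nondegenerate $C$) there exists ${\bf c}\in C$ with $\mathrm{Rsupp}(C)=\mathrm{Rsupp}({\bf c})$. So I would simply cite Theorem \ref{thm:geneJuPe} here.

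For the converse ($\exists {\bf c}$ with $\mathrm{Rsupp}(C)=\mathrm{Rsupp}({\bf c}) \Rightarrow m\geq n$), I would argue by contradiction, or rather directly: since $C$ is nondegenerate we have $\mathrm{Rsupp}(C)=k^n$, which has dimension $n$, so $\wt_R({\bf c})=\dim_k \mathrm{Rsupp}({\bf c})=n$. But $\mathrm{Rsupp}({\bf c})$ is by definition the span of the rows of the matrix $M({\bf c})\in M_{mn}(k)$, which has only $m$ rows, so $\wt_R({\bf c})\leq m$. Combining, $n\leq m$. This is the key numerical bound; it is exactly the observation already flagged in Remark \ref{<=m nec} (and Remark \ref{CN-extd}), namely that $\wt_R({\bf c})\leq m$ for every ${\bf c}\in L^n$ because $M({\bf c})$ has $m$ rows.

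I do not expect any genuine obstacle here — the proposition is a packaging of earlier results, and the only "content" is the trivial rank bound $\wt_R({\bf c})\leq m$. If anything, the mild subtlety is just being careful that "nondegenerate" is being used in the sense that $d_R(C^\perp)>1$, which by Corollary \ref{carac-nondeg} translates to $\mathrm{Rsupp}(C)=k^n$; once that translation is in place both directions are immediate.

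\begin{proof}
By Corollary \ref{carac-nondeg}, the hypothesis that $C$ is nondegenerate with respect to the rank metric means exactly that $\mathrm{Rsupp}(C)=k^n$.

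If $m\geq n$, then Theorem \ref{thm:geneJuPe} provides ${\bf c}\in C$ with $\mathrm{Rsupp}(C)=\mathrm{Rsupp}({\bf c})$.

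Conversely, suppose there exists ${\bf c}\in C$ with $\mathrm{Rsupp}(C)=\mathrm{Rsupp}({\bf c})$. Since $\mathrm{Rsupp}(C)=k^n$, we get $\wt_R({\bf c})=\dim_k \mathrm{Rsupp}({\bf c})=\dim_k k^n=n$. On the other hand, $\mathrm{Rsupp}({\bf c})$ is spanned by the rows of $M({\bf c})\in M_{mn}(k)$, so $\wt_R({\bf c})\leq m$. Therefore $n\leq m$.
\end{proof}
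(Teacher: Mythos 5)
Your proof is correct and follows exactly the paper's argument: the forward direction is the row-count bound $\wt_R({\bf c})\leq m$ applied to $\mathrm{Rsupp}({\bf c})=k^n$, and the converse is a direct citation of Theorem \ref{thm:geneJuPe}. The only difference is that you spell out the translation via Corollary \ref{carac-nondeg} and the dimension count slightly more explicitly than the paper does.
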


\begin{proof}
%By  \cite[Corollary 6.3]{Jurrius17},  $C$ is nondegenerate with respect to the rank metric if and only if  $\mathrm{Rsupp}(C)=k^n$. But then 
Since  matrices $M({\bf c})$ have $m$ rows, the equality  $\mathrm{Rsupp}({\bf c})=k^n$ forces $m\geq n$.
The converse is a special case of  Theorem \ref{thm:geneJuPe}. 
\end{proof}

%\odile{rem, \`a \'ecrire?: montrer avec bon exemple l'optimalit\'e de Cor 6.5 de  \cite[]{Jurrius17}:= ($k=$ dim d'un code nondeg.
%de $L^n$) MINIMAL \`a $n,m$ donn\'es est le plus petit entier $\geq n/m$}

\section{A generalization of the Galois closure of a subspace}\label{sec-C*}

\begin{defn}
Let $L/k$ by {\bf any} field extension, and let $n\geq 1.$ We say that an $L$-linear subspace $C$ of $L^n$ {\it is extended from $k^n$} if $C$ is spanned by a family of vectors in $k^n$.
Equivalently, $C$ has a basis in $k^n$, or else we have $C=\Res(C)_L$.
\end{defn}

\begin{lem}\label{lem-interckn}
Let $L/k$ be any field extension, and let $n\geq 1$. If $(C_i)_{i\in I}$ is a family of subspaces of $L^n$ extended from $k^n$, then $\displaystyle\bigcap_{i\in I}C_i$ is extended from $k^n$.
\end{lem}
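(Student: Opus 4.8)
The statement to prove is Lemma~\ref{lem-interckn}: an intersection of subspaces of $L^n$ each extended from $k^n$ is again extended from $k^n$. The plan is to work with the restriction functor $\Res(-) = (-)\cap k^n$ and the characterization $C = \Res(C)_L$ of being extended from $k^n$. Set $C = \bigcap_{i\in I} C_i$. Since each $C_i$ is extended, $C_i = \Res(C_i)_L$, and the task reduces to showing $C = \Res(C)_L$; by Proposition~\ref{prop:rkbasic}(8) (or simply because $\Res(C)\subset C$ and $C$ is $L$-linear) the inclusion $\Res(C)_L \subseteq C$ is automatic, so the real content is the reverse inclusion $C \subseteq \Res(C)_L$.

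\textbf{Key steps.} First I would fix a $k$-basis $\alpha_1,\dots,\alpha_m$ of $L$ (here one does \emph{not} need finiteness: any $k$-basis of $L$ works, since each vector of $L^n$ involves only finitely many $\alpha_j$). Given ${\bf c}\in C$, write ${\bf c} = \sum_j \alpha_j {\bf c}^{(j)}$ with ${\bf c}^{(j)}\in k^n$ uniquely determined. The crucial claim is that each ${\bf c}^{(j)}$ lies in $\Res(C) = C\cap k^n$. To see ${\bf c}^{(j)}\in C$: for each fixed $i\in I$, since ${\bf c}\in C_i = \Res(C_i)_L$, one can expand ${\bf c}$ in the $\alpha_j$ using a $k$-basis of $\Res(C_i)$, and uniqueness of the expansion in the $\alpha_j$ forces ${\bf c}^{(j)}\in \Res(C_i)\subseteq C_i$. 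As this holds for every $i$, we get ${\bf c}^{(j)}\in \bigcap_i C_i = C$, hence ${\bf c}^{(j)}\in C\cap k^n = \Res(C)$. Therefore ${\bf c} = \sum_j \alpha_j {\bf c}^{(j)} \in \Res(C)_L$, proving $C\subseteq \Res(C)_L$ and hence equality; thus $C$ is extended from $k^n$.

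\textbf{Main obstacle.} The only delicate point is the assertion that expanding ${\bf c}\in \Res(C_i)_L$ against the fixed $k$-basis $\alpha_1,\dots,\alpha_m$ of $L$ produces coefficient vectors lying in $\Res(C_i)$. This is the observation that if $W\subseteq k^n$ is a $k$-subspace then $W_L\cap k^n = W$, equivalently $W_L = W\otimes_k L$ and the decomposition $L^n = \bigoplus_j \alpha_j k^n$ restricts to $W_L = \bigoplus_j \alpha_j W$; this is elementary linear algebra over $k$ but should be stated cleanly, perhaps by invoking that $\Res(C_i) = \Res(C_i)_L\cap k^n$ and more generally that the coordinate projections $L^n\to k^n$, ${\bf c}\mapsto {\bf c}^{(j)}$ carry $\Res(C_i)_L$ into $\Res(C_i)$. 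Everything else is formal bookkeeping, and no hypothesis beyond "$L/k$ is a field extension" is needed — the lemma genuinely holds for arbitrary (possibly infinite) extensions, matching the statement.
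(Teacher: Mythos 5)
Your proof is correct, but it takes a genuinely different route from the paper. You argue directly: fixing a $k$-basis $(\alpha_j)$ of $L$, you use the decomposition $L^n=\bigoplus_j\alpha_j k^n$ and the observation that for any $k$-subspace $W\subseteq k^n$ the coordinate projections ${\bf c}\mapsto{\bf c}^{(j)}$ carry $W_L$ into $W$; applying this with $W=\Res(C_i)$ for every $i$ shows that all components of ${\bf c}\in\bigcap_iC_i$ lie in $\Res\bigl(\bigcap_iC_i\bigr)$, whence the intersection is extended. The paper instead dualizes: by Lemma \ref{perp-basis} each $C_i^\perp$ is extended, so $C_i=(D_i)_L^\perp$ for a $k$-subspace $D_i$ of $k^n$, and then
\[
\bigcap_{i\in I}C_i=\bigcap_{i\in I}(D_i)_L^\perp=\Span_L\bigl((D_i)_L,\ i\in I\bigr)^\perp=\bigl(\Span_k(D_i,\ i\in I)^\perp\bigr)_L ,
\]
turning ``intersection of extended subspaces'' into ``orthogonal of a span of extended subspaces.'' Your approach is more elementary and self-contained: it needs no orthogonal complements, no double-perp identity, and it makes transparent (as you note) that only the decomposition $L=\bigoplus_j\alpha_j k$ is used, so arbitrary, possibly infinite, extensions are covered with no extra care. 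The paper's approach is shorter once Lemma \ref{perp-basis} is in hand and fits the duality theme exploited again in Proposition \ref{sum-C*}; its only mild cost is that one must accept that orthogonality in the $n$-dimensional spaces $k^n$ and $L^n$ behaves well (double-perp, and commutation of $\perp$ with scalar extension), which the finite dimension of $L^n$ over $L$ guarantees. The one step you flag as delicate --- that expanding ${\bf c}\in\Res(C_i)_L$ against $(\alpha_j)$ yields coefficient vectors in $\Res(C_i)$ --- is indeed the crux, and your justification (expand an $L$-combination of a spanning set of $\Res(C_i)$ and invoke uniqueness of the $\bigoplus_j\alpha_j k^n$ decomposition) is sound; I would just spell it out in the final write-up rather than leave it as an ``observation.''
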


\begin{proof}  %odile: 3 corrections
For each $i\in I$, one may write $C_i=(D_i)_L^\perp$, for some $k$-linear subspace $D_i$ of $k^n$, where the orthogonal is taken with respect to the standard $L$-bilinear form on $L^n.$
Indeed, $C_i$ is the orthogonal of $C_i^\perp$, which by Lemma \ref{perp-basis} is extended from $k^n$.
% Indeed,  pick an $L$-basis of $C_i$ consisting of vectors in $k^n$ and let $A_i$ be the matrix of this basis in the canonical %basis of $L^n$. Then $A_i$ has coefficients in $k$, and $C_i$ is the kernel of $A_i^t$, that is the set of solutions in $L^n$ %of a homogeneous linear system with coefficients in $k$, which is equivalently to what we are looking for.

Therefore, we have $$\bigcap_{i\in I}C_i=\bigcap_{i\in I}(D_i)_L^\perp=\Span_L((D_i)_L, i\in I)^\perp=(\Span_
k(D_i, i\in I)^\perp)_L\,.$$ Thus, $\bigcap_{i\in I}C_i$ is extended from $k^n$.
\end{proof}

Note that, given an $L$-linear subspace $C$ of $L^n$, there is at least one $L$-linear subspace of $L^n$ extended from $k^n$ and containing $C$, namely $L^n$ itself. Therefore, in view of lemma  \ref{lem-interckn}, the following statement makes sense.

\begin{cor}\label{cor-c*c'}
Let $L/k$ be a finite Galois extension with Galois group $G$, let $n\geq 1$, and let $C$ be an $L$-linear subspace of $L^n.$ Then $C^*$ is the intersection of all $L$-linear subspaces of $L^n$ extended from $k^n$ and containing $C$.
\end{cor}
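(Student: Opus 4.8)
The plan is to reduce the statement to Theorem \ref{thm :kbas} of Giorgetti--Previtali, which characterizes Galois-closed subspaces as those admitting a basis in $k^n$, i.e. those extended from $k^n$ in the sense just defined. So I would argue that, for a finite Galois extension $L/k$ with group $G$, an $L$-linear subspace of $L^n$ is extended from $k^n$ if and only if it is $G$-invariant. One direction is immediate: if $W$ has a basis in $k^n$, then $G$ fixes that basis pointwise, hence fixes $W$. The converse is exactly the content of Theorem \ref{thm :kbas}. Thus the two families over which we are intersecting---on the one hand all $G$-invariant subspaces containing $C$, on the other all subspaces extended from $k^n$ containing $C$---coincide.

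Granting this, the argument is a short formal manipulation. Let $\mathcal{F}$ denote the family of $L$-linear subspaces of $L^n$ that contain $C$ and are extended from $k^n$. By the remark preceding the corollary, $L^n\in\mathcal{F}$, so $\mathcal{F}\neq\emptyset$, and by Lemma \ref{lem-interckn} the intersection $W:=\bigcap_{V\in\mathcal{F}}V$ is again extended from $k^n$; clearly $C\subseteq W$, so $W\in\mathcal{F}$ and $W$ is the smallest member of $\mathcal{F}$. On the other hand, by the equivalence of the previous paragraph, $\mathcal{F}$ is precisely the family of $G$-invariant subspaces of $L^n$ containing $C$, and $C^*$ is by definition the smallest member of that family. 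Two posets with the same underlying set have the same least element, so $W=C^*$, which is the assertion.

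I expect the main (and essentially only) obstacle to be invoking Theorem \ref{thm :kbas} correctly: one must be sure that ``admits a basis in $k^n$'' in that theorem is literally the same condition as ``extended from $k^n$'' in Definition~\ref{C*}'s surrounding text ($C=\Res(C)_L$), which it is by the equivalences already recorded in Proposition \ref{k-bas-Res-Rsupp}. No genuine computation is needed beyond checking that $W$, being extended from $k^n$, is $G$-invariant and contains $C$, so that it lies in the family defining $C^*$ and therefore $C^*\subseteq W$; and conversely that $C^*$, being $G$-invariant, is extended from $k^n$ by Theorem \ref{thm :kbas} and contains $C$, so it lies in $\mathcal{F}$ and therefore $W\subseteq C^*$. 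Combining the two inclusions gives $C^*=W$ and completes the proof.
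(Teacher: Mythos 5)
Your proof is correct and follows essentially the same route as the paper's: both directions rest on Theorem \ref{thm :kbas} (Galois-closed $\iff$ extended from $k^n$) together with Lemma \ref{lem-interckn} to see that the intersection is itself extended from $k^n$, hence $G$-invariant. The paper phrases it as two inclusions rather than as an identification of the two families, but the content is identical.
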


\begin{proof}
Let $C'$ be the intersection of all $L$-linear subspaces of $L^n$ extended from $k^n$ and containing $C$.

Since $C^*$ is an $L$-linear subspace of $L^n$ extended from $k^n$ (by Theorem \ref{thm :kbas}) and containing $C$, we have $C'\subset C^*$. Conversely, we have $C\subset C'$ by definition. Since $C'$ is extended from $k^n$ (by Lemma \ref{lem-interckn}), Theorem \ref{thm :kbas} shows that   it %satisfies $g(C')\subset C'$ for all $g\in G$.
is $G$-invariant. By definition of $C^*$, we get $C^*\subset C'$.
\end{proof}

This corollary motivates the following generalization of the definition of $C^*$.

\begin{defn}\label{C*}
Let $L/k$ be {\bf any} field extension, and let $n\geq 1.$ For any $L$-linear subspace $C$ of $L^n$, we denote by $C^*$  the intersection of all $L$-linear subspaces of $L^n$ 
extended from $k^n$ and containing $C$.

It is the smallest $L$-linear subspace of $L^n$ extended from $k^n$ and containing $C$.
\end{defn}

\begin{rem}\label{C*Galois=C*}

\begin{enumerate}
\item If $L/k$ is a finite Galois extension, we recover the usual definition of $C^*$, by Corollary \ref{cor-c*c'}. However, this generalization is more convenient, 
since its does not need the extension $L/k$ to be Galois.

\item The definition of $C^*$  immediately yields the following properties.
\begin{enumerate}

\item for all $L$-linear subspaces $C$ of $L^n$, we have $C\subset C^*$.

\item for all $L$-linear subspaces $C$ of $L^n$, we have $C=C^*$ if and only if $C$ is extended from $k^n$. In particular, $(C^*)^*=C^*$.

\item for all $L$-linear subspaces $D$ of $L^n$ extended from $k^n$ such that $C\subset D$, we have $C^*\subset D$.

\item for all $L$-linear subspaces $C,D$ of $L^n$ such that $C\subset D$, we have $C^*\subset D^*$.
\end{enumerate} 
\end{enumerate}
\end{rem}

\begin{prop}\label{sum-C*}
Let $L/k$ be any field 
extension and let $C$, $C'$ be  $L$-linear subspaces of $L^n$ for some $n\geq 1$. Then 
we have $(C+C')^*=C^*+C'^{\,*}$.
\end{prop}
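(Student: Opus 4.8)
The plan is to prove the two inclusions $(C+C')^*\subseteq C^*+C'^{\,*}$ and $C^*+C'^{\,*}\subseteq (C+C')^*$ separately, using only the characterization of $C^*$ as the smallest $L$-linear subspace of $L^n$ extended from $k^n$ that contains the given subspace (Remark \ref{C*Galois=C*}), together with the elementary fact that a sum of two subspaces extended from $k^n$ is again extended from $k^n$.

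First I would record that preliminary fact: if $D$ and $D'$ are extended from $k^n$, then $D+D'$ is extended from $k^n$. This is immediate since $D=\Res(D)_L$ and $D'=\Res(D')_L$ give $D+D'=(\Res(D)+\Res(D'))_L$, so $D+D'$ has a spanning family in $k^n$. For the inclusion $C^*+C'^{\,*}\subseteq (C+C')^*$: the space $(C+C')^*$ is extended from $k^n$ and contains $C+C'$, hence contains $C$ and contains $C'$; by the minimality property (Remark \ref{C*Galois=C*}(2)(c)) applied to each of $C$ and $C'$ we get $C^*\subseteq (C+C')^*$ and $C'^{\,*}\subseteq (C+C')^*$, and adding these gives the desired inclusion.

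For the reverse inclusion $(C+C')^*\subseteq C^*+C'^{\,*}$: by the preliminary fact, $C^*+C'^{\,*}$ is extended from $k^n$ (since $C^*$ and $C'^{\,*}$ are), and it contains $C+C'$ because $C\subseteq C^*$ and $C'\subseteq C'^{\,*}$. Then minimality of $(C+C')^*$ among subspaces extended from $k^n$ containing $C+C'$ yields $(C+C')^*\subseteq C^*+C'^{\,*}$. Combining the two inclusions gives the equality.

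I do not expect any real obstacle here; the proof is a formal consequence of the definition of $C^*$ via minimality, once one observes that the class of subspaces extended from $k^n$ is closed under (finite) sums. The only point requiring the slightest care is that same closure-under-sums statement, which reduces to the equivalence ``extended from $k^n$'' $\iff$ ``$C=\Res(C)_L$'' already built into the definition in the excerpt. I would write it as follows.

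\begin{proof}
We first note that a sum of two $L$-linear subspaces of $L^n$ extended from $k^n$ is again extended from $k^n$: if $D=\Res(D)_L$ and $D'=\Res(D')_L$, then $D+D'$ is spanned by the union of a basis of $\Res(D)$ in $k^n$ and a basis of $\Res(D')$ in $k^n$, hence is extended from $k^n$.

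Now $C^*$ and $C'^{\,*}$ are extended from $k^n$ by Remark \ref{C*Galois=C*}, so by the previous observation $C^*+C'^{\,*}$ is extended from $k^n$. Moreover $C\subset C^*$ and $C'\subset C'^{\,*}$, so $C+C'\subset C^*+C'^{\,*}$. Since $(C+C')^*$ is the smallest $L$-linear subspace of $L^n$ extended from $k^n$ and containing $C+C'$, we deduce $(C+C')^*\subset C^*+C'^{\,*}$.

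Conversely, $(C+C')^*$ is extended from $k^n$ and contains $C+C'$, hence contains both $C$ and $C'$. By Remark \ref{C*Galois=C*} again, applied to $C$ and to $C'$, we get $C^*\subset (C+C')^*$ and $C'^{\,*}\subset (C+C')^*$, whence $C^*+C'^{\,*}\subset (C+C')^*$. Combining the two inclusions gives $(C+C')^*=C^*+C'^{\,*}$.
\end{proof}
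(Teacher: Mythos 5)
Your proof is correct and follows essentially the same route as the paper: both inclusions come from the minimality characterization of the closure in Remark \ref{C*Galois=C*}, once one knows that a sum of subspaces extended from $k^n$ is again extended from $k^n$. The only (minor) divergence is in that auxiliary fact: the paper derives it by passing to orthogonals, using $(D+D')^\perp=D^\perp\cap D'^{\,\perp}$ together with Lemmas \ref{perp-basis} and \ref{lem-interckn}, whereas you observe directly that the union of spanning families in $k^n$ spans the sum --- a simpler and equally valid argument.
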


\begin{proof}
By Remark \ref{C*Galois=C*} the subspace $(C+C')^*$ contains both
$C^*$ and $C'^*$, thus it contains their sum.
Now combining Lemmas \ref{perp-basis} and \ref{lem-interckn}, we see that given two subspaces
$D$, $D'$ of $L^n$ which are extended from $k^n$, their sum also is extended from $k^n$. Indeed, the orthogonal
of $D+D'$ is the intersection of $D^\perp$ and $D'^{\,\perp}$. Therefore $C^*+C'^{\,*}$ is extended from $k^n$.
Since it contains $C+C'$, it also contains $(C+C')^*$.
\end{proof}

Our next proposition  generalizes Proposition 5.5 and Lemma 5.6 in \cite{Jurrius17}. In particular, it then follows 
 from the
first item in Remark \ref{C*Galois=C*}  that %the hypothesis in
 their  statements 5.5, 5.6 and 5.7 hold for {\sl any}  Galois extension, not necessarily cyclic.  

\begin{prop}\label{Rsupp-C*}
Let $L/k$ be any finite 
extension and let $C$ be an $L$-linear subspace of $L^n$ for some $n\geq 1$. Then 
%scalar multiplication in $L^n$ yields an isomorphism of $L$-vector spaces 
$$C^*=\Rsupp(C)_L\,.$$
%In other words, a $k$-basis of $\Rsupp(C)$ is an $L$-basis of $C^*$.
In particular we have $\, \dim C^*=\wt_R(C) ,\ $ so that $\dim C\leq \wt_R(C)$, and 
 $\dim (L{\bf c})^*=\wt_R({\bf c})$ for any ${\bf c}\in L^n$.
\end{prop}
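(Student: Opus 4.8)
The plan is to prove the equality $C^* = \Rsupp(C)_L$ by two inclusions, using the universal property of $C^*$ as the smallest $L$-linear subspace of $L^n$ extended from $k^n$ and containing $C$ (Remark \ref{C*Galois=C*}). For the inclusion $C^* \subset \Rsupp(C)_L$, I would first observe that $\Rsupp(C)_L$ is by its very definition extended from $k^n$ (it is the $L$-span of a $k$-subspace of $k^n$), and by statement 8 of Proposition \ref{prop:rkbasic} it contains $C$. Hence by the minimality property (item 2(c) of Remark \ref{C*Galois=C*}) we get $C^* \subset \Rsupp(C)_L$.

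For the reverse inclusion $\Rsupp(C)_L \subset C^*$, it suffices to show $\Rsupp(C) \subset C^*$, and since $\Rsupp(C)$ is spanned over $k$ by the various $\Rsupp({\bf c})$ for ${\bf c} \in C$ (by definition, or statement 4 of Proposition \ref{prop:rkbasic}), it is enough to check that $\Rsupp({\bf c}) \subset C^*$ for each ${\bf c} \in C$. Fix a $k$-basis $\alpha_1,\dots,\alpha_m$ of $L$ and write ${\bf c} = \sum_{i=1}^m \alpha_i {\bf c}^{(i)}$ with ${\bf c}^{(i)} \in k^n$ the rows of $M({\bf c})$, which span $\Rsupp({\bf c})$. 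I want to argue that each ${\bf c}^{(i)}$ lies in $C^*$. Since $C^*$ is extended from $k^n$, it has a basis in $k^n$, equivalently $C^* = \Res(C^*)_L$; because $C^*$ is $L$-linear and $\alpha_1,\dots,\alpha_m$ is a $k$-basis of $L$, an element $\sum_i \alpha_i {\bf d}^{(i)}$ with ${\bf d}^{(i)} \in k^n$ lies in $\Res(C^*)_L$ if and only if each ${\bf d}^{(i)}$ lies in $\Res(C^*)$. Applying this to ${\bf c} \in C \subset C^* = \Res(C^*)_L$ shows ${\bf c}^{(i)} \in \Res(C^*) \subset C^*$ for all $i$, hence $\Rsupp({\bf c}) \subset C^*$, and summing over ${\bf c} \in C$ gives $\Rsupp(C) \subset C^*$, whence $\Rsupp(C)_L \subset C^*$ by $L$-linearity.

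Combining the two inclusions yields $C^* = \Rsupp(C)_L$. The remaining assertions are then immediate: taking dimensions gives $\dim C^* = \dim_L \Rsupp(C)_L = \dim_k \Rsupp(C) = \wt_R(C)$ (the middle equality because $\Rsupp(C)_L \cong \Rsupp(C) \otimes_k L$), and since $C \subset C^*$ we get $\dim C \leq \dim C^* = \wt_R(C)$. The last claim is the special case $C = L{\bf c}$, giving $\dim (L{\bf c})^* = \wt_R({\bf c})$.

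I do not expect a serious obstacle here; the only point requiring a little care is the step where I decompose an element of $C^* = \Res(C^*)_L$ along the $k$-basis $\alpha_1,\dots,\alpha_m$ and read off that each "coordinate vector" lies in $\Res(C^*)$ — this is exactly the elementary fact (already used implicitly in the proof of Proposition \ref{k-bas-Res-Rsupp} and Lemma \ref{lem:supp}) that if a space is extended from $k^n$ then its elements, expanded in a fixed $k$-basis of $L$, have all their components in the restriction. One should make sure this is stated cleanly rather than hand-waved, but it is routine.
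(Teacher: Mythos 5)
Your proof is correct and follows essentially the same route as the paper: one inclusion via the minimality of $C^*$ among subspaces extended from $k^n$ containing $C$, the other via the fact that a subspace extended from $k^n$ has rank support equal to its restriction. The only difference is that where the paper simply invokes Proposition \ref{k-bas-Res-Rsupp} (applied to $C^*$) to get $\Rsupp(C^*)_L=C^*$, you re-derive that fact by hand through the coordinate decomposition ${\bf c}=\sum_i\alpha_i{\bf c}^{(i)}$ — a valid but redundant unpacking of a lemma already available.
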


\begin{proof}
We know from Proposition \ref{prop:rkbasic} that $\Rsupp(C)_L$ contains $C$. Since it is also extended from $k^n$, 
$\Rsupp(C)_L\,$ contains $C^*$. Conversely, $\Rsupp(C)_L$  is included in $\Rsupp(C^*)_L$,
and since the subspace $C^*$ is extended from $k^n$, we have $\Rsupp(C^*)_L=C^*$ (by Proposition
\ref{k-bas-Res-Rsupp}).
Whence $\Rsupp(C)_L$ is equal to $C^*$. Now use that 
 $\Rsupp(C)_L$ has dimension $\dim \Rsupp(C)=\wt_R(C)$, and that $C\subset C^*$. Finally we write
 our equality for $C=L{\bf c}$ and use statement 2 of \ref{prop:rkbasic}.
 %and the second contains $C$. Therefore we get  $\dim C\leq \wt_R(C)$. 
\end{proof}

\begin{rem}\label{CNdim}
The equality  $ \dim C^*=\wt_R(C)$ implies that the condition $\dim C^*\leq m$ is necessary for the existence of
 an element ${\bf c}\in C$  such that $\mathrm{Rsupp}(C)=\mathrm{Rsupp}({\bf c})$. 
Indeed,  if ${\bf c}\in L^n$ then its matrix $M({\bf c})$  has $m$ rows so that 
$\mathrm{dim}\,\mathrm{Rsupp}({\bf c})\leq m$.
In particular, the weaker condition %in turn implies 
$\dim C\leq m$ is %also 
necessary. Note that, whereas 
we have shown in Proposition \ref{Thm-cas r<m} that  in case $C$ is extended from $k^n$ 
this last condition is also sufficient, 
 in case $C$ is {\sl not} extended from $k^n$
and such an element 
${\bf c}$ %\in C$  such that $\mathrm{Rsupp}(C)=\mathrm{Rsupp}({\bf c})$ 
exists, we deduce from 
our Proposition that $\dim C\leq m-1$.
\end{rem}

We give two corollaries to Proposition \ref{Rsupp-C*}. Note that the second one  
admits an elementary proof, using the Galois group, in  case the extension is Galois.

\begin{cor}\label{equiv-thm-<c>*}
Let $L/k$ be a finite  extension, let $n\geq 1$,  let $C$ be an $L$-linear subspace of $L^n$ and let ${\bf c}\in C$.
Then we have $\mathrm{Rsupp}(C)=\mathrm{Rsupp}({\bf c})$ if and only if  $C^*=(L{\bf c})^*$, that is, if and 
only if $C\subset (L{\bf c})^*$.
\end{cor}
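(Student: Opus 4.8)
The plan is to deduce Corollary \ref{equiv-thm-<c>*} directly from Proposition \ref{Rsupp-C*}, which identifies $C^*$ with $\Rsupp(C)_L$ for any finite extension. The equivalences we must establish form a chain:
\[
\Rsupp(C)=\Rsupp({\bf c}) \iff C^*=(L{\bf c})^* \iff C\subset (L{\bf c})^*.
\]

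For the first equivalence, I would apply the $L$-extension functor $(-)_L$ and Proposition \ref{Rsupp-C*}. If $\Rsupp(C)=\Rsupp({\bf c})$, then extending to $L$ gives $\Rsupp(C)_L=\Rsupp({\bf c})_L$, i.e. $C^*=(L{\bf c})^*$ (using $\Rsupp({\bf c})_L=\Rsupp(L{\bf c})_L=(L{\bf c})^*$ from the last sentence of Proposition \ref{Rsupp-C*}). Conversely, if $C^*=(L{\bf c})^*$, then $\Rsupp(C)_L=\Rsupp({\bf c})_L$; since both $\Rsupp(C)$ and $\Rsupp({\bf c})$ are $k$-subspaces of $k^n$, and the functor $(-)_L$ is injective on $k$-subspaces (one recovers $D$ from $D_L$ as $D_L\cap k^n$, or via dimension count on a basis), we recover $\Rsupp(C)=\Rsupp({\bf c})$. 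I would phrase this last point by noting that for a $k$-subspace $D$ of $k^n$ one has $\Res(D_L)=D$, which follows from Proposition \ref{k-bas-Res-Rsupp} applied to $D_L$ (it has a basis in $k^n$, so $\Res(D_L)=\Rsupp(D_L)$, and a basis of $D$ in $k^n$ has rank support spanning $D$ by statements 4 and 5 of Proposition \ref{prop:rkbasic}).

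For the second equivalence, the forward direction is immediate since $C\subset C^*$ by Remark \ref{C*Galois=C*}(2)(a), so $C^*=(L{\bf c})^*$ gives $C\subset (L{\bf c})^*$. For the backward direction, if $C\subset (L{\bf c})^*$, then by monotonicity and idempotence of $(-)^*$ (Remark \ref{C*Galois=C*}(2)(d) and (2)(b)) we get $C^*\subset ((L{\bf c})^*)^*=(L{\bf c})^*$; and since ${\bf c}\in C$ implies $L{\bf c}\subset C$, hence $(L{\bf c})^*\subset C^*$, we conclude $C^*=(L{\bf c})^*$. None of these steps presents a genuine obstacle — the content is entirely in Proposition \ref{Rsupp-C*}, which has already been proved; the corollary is a formal unwinding. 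The only mild care needed is the injectivity of $(-)_L$ on $k$-subspaces, which I would dispatch with a one-line reference to $\Res(D_L)=D$.
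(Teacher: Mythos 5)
Your proof is correct and rests on the same key ingredient as the paper's, namely Proposition \ref{Rsupp-C*}; the handling of the second equivalence via Remark \ref{C*Galois=C*} is identical. The only (harmless) difference is mechanical: for the first equivalence the paper exploits the automatic inclusions $\mathrm{Rsupp}({\bf c})\subset\mathrm{Rsupp}(C)$ and $(L{\bf c})^*\subset C^*$ and compares dimensions via $\dim_k\mathrm{Rsupp}(C)=\dim_L C^*$, whereas you invoke the injectivity of $D\mapsto D_L$ on $k$-subspaces of $k^n$ (which you justify adequately via $\Res(D_L)=D$) — both are valid formal unwindings of the same proposition.
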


\begin{proof}
Since $\mathrm{Rsupp}({\bf c})\subset\mathrm{Rsupp}({C})$ and $(L{\bf c})^*\subset C^*$, both equalities
are equivalent to the equality of the respective dimensions as $k$ %resp.~as 
and $L$-subspaces. But Proposition \ref{Rsupp-C*} shows that the $k$-dimension of $\Rsupp(C)$  is the same as the 
$L$-dimension of $C^*$, and similarly when replacing  $C$ with $L{\bf c}$. 

Finally, if ${\bf c}\in C$ and $C\subset (L{\bf c})^*$,  we deduce  that  $C^*=(L{\bf c})^*$ by the second remark \ref{C*Galois=C*}. %So the proof is complete.
\end{proof}

\begin{cor}\label{RsuppC=C*}
Let $L/k$ be a finite  extension, let $n\geq 1$ and let $C$ be an $L$-linear subspace of $L^n$.
Then $\mathrm{Rsupp}(C)=\mathrm{Rsupp}(C^*)$. In particular if  $L/k$ is separable, we have  $\Tr(C)=\Tr(C^*)$.
\end{cor}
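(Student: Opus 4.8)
**Corollary \ref{RsuppC=C*}** asserts $\mathrm{Rsupp}(C) = \mathrm{Rsupp}(C^*)$ (with the trace consequence in the separable case). Here is how I would prove it.

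The plan is to deduce this directly from Proposition \ref{Rsupp-C*}, which identifies $C^* = \mathrm{Rsupp}(C)_L$ for any finite extension. First I would note the obvious inclusion $\mathrm{Rsupp}(C) \subset \mathrm{Rsupp}(C^*)$, which follows from $C \subset C^*$ together with Remark \ref{C*Galois=C*}(2a) applied through the monotonicity of $\mathrm{Rsupp}$ on subspaces (this monotonicity is itself immediate from the definition of $\mathrm{Rsupp}$ of a subspace as a span, or from statement 4 of Proposition \ref{prop:rkbasic}). For the reverse inclusion, the key observation is that $C^* = \mathrm{Rsupp}(C)_L$ is extended from $k^n$, so by Proposition \ref{k-bas-Res-Rsupp} we have $\mathrm{Rsupp}(C^*) = \Res(C^*)$, and moreover $\Res(C^*) = \Res(\mathrm{Rsupp}(C)_L) = \mathrm{Rsupp}(C)$ since for a $k$-subspace $D \subset k^n$ one has $\Res(D_L) = D$. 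Chaining these equalities gives $\mathrm{Rsupp}(C^*) = \mathrm{Rsupp}(C)$, proving the corollary.

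Alternatively — and perhaps more cleanly — one can avoid even mentioning $\Res$: apply Proposition \ref{Rsupp-C*} twice. It gives $(C^*)^* = \mathrm{Rsupp}(C^*)_L$, while Remark \ref{C*Galois=C*}(2b) gives $(C^*)^* = C^*$. Combining with $C^* = \mathrm{Rsupp}(C)_L$ yields $\mathrm{Rsupp}(C^*)_L = \mathrm{Rsupp}(C)_L$, and since both sides are $L$-extensions of $k$-subspaces of $k^n$, applying $\Res$ (or simply noting that $D \mapsto D_L$ is injective on $k$-subspaces) gives $\mathrm{Rsupp}(C^*) = \mathrm{Rsupp}(C)$. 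For the separable case, Theorem \ref{thm:Rsupp-Tr} identifies $\mathrm{Rsupp}$ with $\Tr$ on any $L$-linear subspace, so $\Tr(C) = \mathrm{Rsupp}(C) = \mathrm{Rsupp}(C^*) = \Tr(C^*)$ is immediate.

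There is no real obstacle here; the statement is a short formal consequence of Proposition \ref{Rsupp-C*}. The only point requiring a moment of care is the fact that $D \mapsto D_L$ is injective on $k$-linear subspaces of $k^n$ (equivalently $\Res(D_L) = D$), which is standard but should be cited — it is implicit in the discussion at the start of Section \ref{sec-basics} and in Proposition \ref{k-bas-Res-Rsupp}. I would simply write the two-line argument and point to Proposition \ref{Rsupp-C*} and, for the separable addendum, Theorem \ref{thm:Rsupp-Tr}.
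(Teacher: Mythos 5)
Your proof is correct and rests on the same key ingredient as the paper's, namely Proposition \ref{Rsupp-C*}; the only cosmetic difference is that the paper concludes by a dimension count (both $\Rsupp(C)$ and $\Rsupp(C^*)$ have dimension $\dim C^*$, so the obvious inclusion is an equality), whereas you conclude by computing $\Res(\Rsupp(C)_L)$ explicitly or by invoking the injectivity of $D\mapsto D_L$. Both routes are valid and equally short, and your handling of the separable case via Theorem \ref{thm:Rsupp-Tr} matches the paper exactly.
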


\begin{proof} 
Proposition \ref{Rsupp-C*} implies that the $k$-subspaces  $\Rsupp(C)$ and $\Rsupp(C^*)$ have %\odile{the?}
the  same dimension. Thus the obvious inclusion $\mathrm{Rsupp}(C)\subset\mathrm{Rsupp}(C^*)$ is an equality.

 If  $L/k$ is separable, Theorem \ref{thm:Rsupp-Tr}  tells us that
$\Rsupp$ and $\Tr$  coincide on $L$-linear subspaces of $L^n$, so %we obtain  the second equality.
the proof is complete.
\end{proof}

Finally, we slightly generalize Proposition \ref{Thm-cas r<m}, 
which similarly works even in case $m<n$.

\begin{lem}\label{thm-+v-k^n}
Let $L/k$ be a finite  extension of degree $m$, let $n\geq 1$, and  let $C$ be an $L$-linear subspace of $L^n$ such that
$\dim C^*\leq m$. Suppose $C=C_1\oplus L{\bf c}_2$, where ${\bf c}_2\in k^n$, and that there exists
${\bf c}_1\in C_1$ such that $\mathrm{Rsupp}(C_1)=\mathrm{Rsupp}({\bf c}_1)$. Then there exists
${\bf c}\in C$ such that $\mathrm{Rsupp}(C)=\mathrm{Rsupp}({\bf c})$.
\end{lem}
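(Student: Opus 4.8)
The idea is to reduce the statement, through Proposition~\ref{prop:rkbasic}, to an assertion about the rows of the matrices $M(-)$, and then to produce $\mathbf{c}$ explicitly by correcting $\mathbf{c}_1$ with a suitable $L$-multiple of $\mathbf{c}_2$. First I would dispose of the degenerate cases: if $\mathbf{c}_2=0$ then $C=C_1$ and $\mathbf{c}=\mathbf{c}_1$ works, so I may assume $\mathbf{c}_2\neq 0$. Applying Proposition~\ref{prop:rkbasic}(4) to a generating family of $C$ obtained from one of $C_1$ together with $\mathbf{c}_2$, and using Proposition~\ref{prop:rkbasic}(5) (since $\mathbf{c}_2\in k^n$) together with the hypothesis $\mathrm{Rsupp}(C_1)=\mathrm{Rsupp}(\mathbf{c}_1)$, I get $\mathrm{Rsupp}(C)=\mathrm{Rsupp}(\mathbf{c}_1)+k\mathbf{c}_2$. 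If $\mathbf{c}_2\in\mathrm{Rsupp}(\mathbf{c}_1)$ this equals $\mathrm{Rsupp}(\mathbf{c}_1)$ and again $\mathbf{c}=\mathbf{c}_1$ suffices; so the genuine case is $\mathbf{c}_2\notin\mathrm{Rsupp}(\mathbf{c}_1)$, in which $\wt_R(C)=\wt_R(\mathbf{c}_1)+1$.

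The key point comes next: by Proposition~\ref{Rsupp-C*} one has $\dim C^*=\wt_R(C)$, so the hypothesis $\dim C^*\leq m$ forces $\wt_R(\mathbf{c}_1)\leq m-1$. Now fix a $k$-basis $\alpha_1,\ldots,\alpha_m$ of $L$ and write $\mathbf{c}_1=\sum_{i=1}^m\alpha_i\mathbf{c}_1^{(i)}$ with $\mathbf{c}_1^{(i)}\in k^n$; the $\mathbf{c}_1^{(i)}$ are precisely the rows of $M(\mathbf{c}_1)$ and they span $\mathrm{Rsupp}(\mathbf{c}_1)$, a subspace of dimension $\leq m-1$. Since there are $m$ of them, they admit a nontrivial $k$-linear relation $\sum_{i=1}^m\lambda_i\mathbf{c}_1^{(i)}=0$; choose an index $i_0$ with $\lambda_{i_0}\neq 0$.

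I would then set $\mathbf{c}=\mathbf{c}_1+\lambda_{i_0}^{-1}\alpha_{i_0}\mathbf{c}_2\in C$. Writing $\mathbf{c}=\sum_{i=1}^m\alpha_i(\mathbf{c}_1^{(i)}+b_i\mathbf{c}_2)$ with $b_{i_0}=\lambda_{i_0}^{-1}$ and $b_i=0$ for $i\neq i_0$, the rows of $M(\mathbf{c})$ are the vectors $\mathbf{c}_1^{(i)}+b_i\mathbf{c}_2$. Taking their $k$-linear combination with coefficients $\lambda_i$ recovers $\sum_i\lambda_i\mathbf{c}_1^{(i)}+\lambda_{i_0}b_{i_0}\mathbf{c}_2=\mathbf{c}_2$, so $\mathbf{c}_2\in\mathrm{Rsupp}(\mathbf{c})$; subtracting $b_i\mathbf{c}_2$ from the $i$-th row then shows $\mathbf{c}_1^{(i)}\in\mathrm{Rsupp}(\mathbf{c})$ for every $i$. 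Hence $\mathrm{Rsupp}(\mathbf{c})\supseteq\mathrm{Rsupp}(\mathbf{c}_1)+k\mathbf{c}_2=\mathrm{Rsupp}(C)$, while the reverse inclusion holds automatically since $\mathbf{c}\in C$. This gives $\mathrm{Rsupp}(\mathbf{c})=\mathrm{Rsupp}(C)$, as required.

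The only step needing real attention is the dimension bookkeeping that turns $\dim C^*\leq m$ into $\wt_R(\mathbf{c}_1)\leq m-1$: this is exactly what forces the $m$ rows of $M(\mathbf{c}_1)$ to be linearly dependent, which in turn is what allows $\mathbf{c}_2$ to be ``absorbed'' into the row space of $M(\mathbf{c})$ after the correction. The rest is a routine verification. (Incidentally, this lemma recovers Proposition~\ref{Thm-cas r<m} by splitting off one $k$-rational basis vector at a time and iterating.)
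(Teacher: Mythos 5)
Your proof is correct and follows essentially the same route as the paper: both use $\dim C^*\leq m$ together with Proposition \ref{Rsupp-C*} to get $\wt_R({\bf c}_1)\leq m-1$ in the nontrivial case, hence a linear dependence among the $m$ rows of $M({\bf c}_1)$, and then correct ${\bf c}_1$ by an $L$-multiple of ${\bf c}_2$ so that ${\bf c}_2$ is absorbed into the row space. The only difference is one of execution: the paper changes the $k$-basis of $L$ to make the last row of $M({\bf c}_1)$ vanish before adding $\alpha_m{\bf c}_2$, whereas you keep the basis fixed and extract ${\bf c}_2$ directly from the dependence relation $\sum_i\lambda_i{\bf c}_1^{(i)}=0$ --- a slightly cleaner verification that avoids invoking basis-independence of the rank support.
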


\begin{proof}
Let $\alpha_1,\ldots,\alpha_m$ denote a $k$-basis of $L$.
We know from Proposition \ref{prop:rkbasic} that $\mathrm{Rsupp}({\bf c}_2)=k{\bf c}_2$ and that
$\mathrm{Rsupp}(C)$ is thus the sum of $\mathrm{Rsupp}({\bf c}_1)$ and $k{\bf c}_2$. 
If ${\bf c}_2\in\mathrm{Rsupp}({\bf c}_1)$ we are done, so we assume the contrary. Then the 
sum is direct, so that taking dimensions we get by Proposition \ref{Rsupp-C*} $\dim \mathrm{Rsupp}(C)\leq m$ and 
 $wt_R({\bf c}_1)\leq m-1$. Thus  one row of the matrix $M({\bf c}_1)$, say the last one, 
is a linear combination of the  other ones. We change the vectors 
$\alpha_1,\ldots,\alpha_{m-1}$ of the basis of $L$ by adding to each of  them a vector $\mu_i\alpha_m$, where
 %the suitable         
 $\mu_i\in k$ is chosen so that,
 in the new basis of $L$, the last row of  the corresponding matrix  $M({\bf c}_1)$ is zero.
We set ${\bf c}={\bf c}_1+\alpha_m{\bf c}_2\in C$ so that we have $\mathrm{Rsupp}({\bf c})=
\mathrm{Rsupp}({\bf c}_1)+k{\bf c}_2=\mathrm{Rsupp}(C)$.
\end{proof}

\begin{cor}\label{thm+Res} 
Let $L/k$ be a finite  extension of degree $m$, let $n\geq 1$, and  let $C$ be an $L$-linear subspace of $L^n$ such that
$\dim C^*\leq m$. Suppose that $C_1$ is an $L$-linear subspace  such that $C=C_1+ \Res(C)_L\,$,  and that there exists
${\bf c}_1\in C_1$ such that $\mathrm{Rsupp}(C_1)=\mathrm{Rsupp}({\bf c}_1)$. Then there exists
${\bf c}\in C$ such that $\mathrm{Rsupp}(C)=\mathrm{Rsupp}({\bf c})$.
\end{cor}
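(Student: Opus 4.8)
The plan is to bootstrap from Lemma \ref{thm-+v-k^n} by peeling off $\Res(C)_L$ one dimension at a time. First I would note that $\Res(C)_L$ is extended from $k^n$, so if ${\bf e}_1,\ldots,{\bf e}_s\in k^n$ is a $k$-basis of $\Res(C)$ (taken empty if $\Res(C)=\{0\}$), then these vectors span $\Res(C)_L$ over $L$, whence $C=C_1+L{\bf e}_1+\cdots+L{\bf e}_s$. If $s=0$ then $C=C_1$ and ${\bf c}={\bf c}_1$ already does the job, so I may assume $s\geq 1$.

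Next I would introduce the partial sums $C^{(0)}=C_1$ and $C^{(j)}=C^{(j-1)}+L{\bf e}_j$ for $1\leq j\leq s$, so that $C^{(s)}=C$, and prove by induction on $j$ that there is ${\bf c}^{(j)}\in C^{(j)}$ with $\mathrm{Rsupp}(C^{(j)})=\mathrm{Rsupp}({\bf c}^{(j)})$. The base case $j=0$ is the hypothesis on $C_1$ and ${\bf c}_1$. For the inductive step, if ${\bf e}_j\in C^{(j-1)}$ then $C^{(j)}=C^{(j-1)}$ and I keep ${\bf c}^{(j)}={\bf c}^{(j-1)}$; otherwise ${\bf e}_j\neq 0$ and $C^{(j-1)}\cap L{\bf e}_j=\{0\}$, so $C^{(j)}=C^{(j-1)}\oplus L{\bf e}_j$ with ${\bf e}_j\in k^n$. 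Since $C^{(j)}\subset C$, Remark \ref{C*Galois=C*} gives $(C^{(j)})^*\subset C^*$, hence $\dim (C^{(j)})^*\leq \dim C^*\leq m$. Thus Lemma \ref{thm-+v-k^n}, applied to $C^{(j)}$ with the splitting $C^{(j)}=C^{(j-1)}\oplus L{\bf e}_j$ and the vector ${\bf c}^{(j-1)}\in C^{(j-1)}$ supplied by the induction hypothesis, produces the desired ${\bf c}^{(j)}$. Taking $j=s$ yields ${\bf c}={\bf c}^{(s)}\in C$ with $\mathrm{Rsupp}(C)=\mathrm{Rsupp}({\bf c})$.

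I do not anticipate any real obstacle: the argument is a routine induction on top of Lemma \ref{thm-+v-k^n}, and the only points needing care are securing an $L$-spanning family of $\Res(C)_L$ inside $k^n$ (immediate from the definition of a subspace extended from $k^n$) and checking that each partial sum still satisfies the hypothesis $\dim (\cdot)^*\leq m$ (immediate from the monotonicity $C\subset D\Rightarrow C^*\subset D^*$ recorded in Remark \ref{C*Galois=C*}).
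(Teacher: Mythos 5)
Your argument is correct and is essentially the paper's own proof: both iterate Lemma \ref{thm-+v-k^n}, adjoining one line $L{\bf e}_j$ with ${\bf e}_j\in\Res(C)\subset k^n$ at a time and using the monotonicity of $(\cdot)^*$ to keep the hypothesis $\dim(\cdot)^*\leq m$. The only (immaterial) difference is that the paper picks vectors of $\Res(C)$ completing an $L$-basis of $C_1$ to one of $C$, so every sum is automatically direct, whereas you take a full $k$-basis of $\Res(C)$ and dispose of the redundant ${\bf e}_j$ by a case distinction.
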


\begin{proof}
Let ${\bf e}_1,\ldots,{\bf e}_t$ be vectors of $\Res(C)$ which complete a basis of $C_1$ to an $L$- basis of $C$.
We may repeatedly apply Lemma \ref{thm-+v-k^n}, taking successively $C_1^{(i)}=C_1+\sum_{j<i} L{\bf e}_j$.
${\bf c}_2={\bf e}_i$, $1\leq i\leq t$.
\end{proof}

\section{Generalized rank weights for finite extensions}\label{equiv-def}

We now set notation %which will be convenient
in order to discuss the various definitions of the  generalized rank weights
of a code and their equivalence.
In what follows, we let $k\subset L$ be a finite %separable 
extension of degree $m$, and let $n\in\N^*$. We let $C$ 
be a linear subspace of $L^n$ and let $r$ be an integer such that $1\leq r\leq \dim C$.
Recall from Remark \ref{C*Galois=C*} that our definition \ref{C*} of $C^*$ generalizes the definition of Galois closure
used in \cite{Jurrius17}. 

\begin{notation} We will denote by  $\maxwt_R(C)$  the maximum value  $\max\limits_{{\bf c}\in C}
\wt_R({\bf c})$.                     % modif notation 17 dec.
\end{notation}

Clearly we have  $\maxwt_R(C)\leq \wt_R(C)$, and equality holds   
if and only if there exists ${\bf c}\in C$ such that  $\mathrm{Rsupp}(C)=\mathrm{Rsupp}({\bf c})$.

We now present, in the setting of an arbitrary  finite %separable 
extension, the various definitions which have been proposed for the $r$-th generalized rank weight
of $C$. We insist that the star  symbol %$*$ 
used in definitions 2.~and 4.~refers to our definition \ref{C*}, so that 2.~and 4.~are indeed generalized
versions of the definitions considered in \cite{Jurrius17}.

\begin{defn}\label{4def-grw} 
\begin{enumerate}
\item $d_{R,r}(C)=\min\limits_{\stackrel{D\subset C} {\dim(D)=r}}\wt_R(D)$, following \cite[Definition 2.5]{Jurrius17} 
\item ${\cal M}_r(C)=\min\limits_{\stackrel{V\subset L^n,V=V^*} {\dim(C\cap V)\geq r}}\dim V$,   following  \cite[Definition 5]{Kurihara15}
\item   $O\!S_r(C)=\min\limits_{\stackrel{D\subset C} {\dim(D)=r}} \maxwt_R(D)$,  following \cite{Oggier12}
\item  $D_r(C)=\min\limits_{\stackrel{D\subset C} {\dim(D)=r}} \maxwt_R(D^*)$, following \cite{Ducoat15}. 
\end{enumerate}
\end{defn}

Jurrius-Pellikaan have shown in \cite [4.4 and 5.4]{Jurrius17}, that $d_{R,r}(C)={\cal M}_r(C)$
 whenever  $L/k$ is Galois.
They also proved in \cite[Theorem 5.8]{Jurrius17} that if $m\geq n$ and $L/k$ is cyclic, then ${\cal M}_r(C)=D_r(C)$.

Our main statement in this section is the following result.
%\odile{idem si $L/k$ galoisien, gr\^ace \`a cor \ref{elem}

\begin{thm}\label{equivdef} 
Let $L/k$ be a finite %separable 
extension of degree $m$, let $n\geq 1$ be an integer such that $n\leq m$ and let $C$ be an $L$-linear subspace of $L^n.$ 
% Suppose $m\geq n$. 
Then  %$d_{R,r}(C)%={\cal M}_r(C)=O\!S_r(C)=D_r(C)$. In particular,
 all four definitions \ref{4def-grw} coincide.
\end{thm}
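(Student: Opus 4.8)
The plan is to prove the chain of equalities
\[
OS_r(C) = d_{R,r}(C) = \mathcal{M}_r(C) = D_r(C)
\]
by relating each quantity to the invariant $\dim D^*$ attached to $r$-dimensional subspaces $D \subseteq C$, using Proposition \ref{Rsupp-C*} (which gives $\dim D^* = \wt_R(D)$) together with Theorem \ref{thm:geneJuPe} (which, under the hypothesis $m \geq n$, lets us realize $\wt_R(D)$ as $\wt_R(\mathbf d)$ for a single vector $\mathbf d \in D$, hence as $\maxwt_R(D)$). First I would record the reformulation $\mathcal{M}_r(C) = \min\{\dim V : V = V^*,\ \dim(C \cap V) \geq r\}$. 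Given such a $V$, pick any $r$-dimensional $D \subseteq C \cap V$; then $D^* \subseteq V^* = V$, so $\dim D^* \leq \dim V$, and taking the minimum over $D$ shows $\min_D \dim D^* \leq \mathcal{M}_r(C)$. Conversely, for any $r$-dimensional $D \subseteq C$, the subspace $V = D^*$ is extended from $k^n$, satisfies $V = V^*$ by Remark \ref{C*Galois=C*}, and has $\dim(C \cap V) \geq \dim D = r$; hence $\mathcal{M}_r(C) \leq \dim D^*$, and minimizing gives $\mathcal{M}_r(C) = \min_{\dim D = r}\dim D^*$. By Proposition \ref{Rsupp-C*}, $\dim D^* = \wt_R(D)$, so $\mathcal{M}_r(C) = \min_{\dim D = r}\wt_R(D) = d_{R,r}(C)$. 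This handles the equality $d_{R,r}(C) = \mathcal{M}_r(C)$ in full generality (no use of $n \leq m$ yet), recovering for Galois extensions the result of \cite{Jurrius17}.

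Next I would bring in Theorem \ref{thm:geneJuPe}. The key observation is that for any $L$-linear subspace $D \subseteq L^n$ with $\dim D^* \leq m$ — in particular for any $D \subseteq C \subseteq L^n$ when $n \leq m$, since then $\dim D^* = \wt_R(D) \leq \dim_k k^n = n \leq m$ — Theorem \ref{thm:geneJuPe} applies. Actually one must be slightly careful: Theorem \ref{thm:geneJuPe} is stated for subspaces of $L^n$ with $m \geq n$, so it applies directly to $D \subseteq C \subseteq L^n$. It produces $\mathbf d \in D$ with $\Rsupp(\mathbf d) = \Rsupp(D)$, hence $\wt_R(\mathbf d) = \wt_R(D)$, and since always $\wt_R(\mathbf c) \leq \wt_R(D)$ for $\mathbf c \in D$, this shows $\maxwt_R(D) = \wt_R(D)$ for every subspace $D$ (using $n \leq m$). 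Consequently
\[
OS_r(C) = \min_{\dim D = r}\maxwt_R(D) = \min_{\dim D = r}\wt_R(D) = d_{R,r}(C).
\]
For $D_r(C)$, I would apply the same principle to the subspace $D^*$: since $\dim D^* = \wt_R(D) \leq n \leq m$, and $D^*$ is a subspace of $L^n$, Theorem \ref{thm:geneJuPe} gives $\maxwt_R(D^*) = \wt_R(D^*) = \dim \Rsupp(D^*)$. But $\Rsupp(D^*) = \Rsupp(D)$ by Corollary \ref{RsuppC=C*}, so $\maxwt_R(D^*) = \wt_R(D)$, whence
\[
D_r(C) = \min_{\dim D = r}\maxwt_R(D^*) = \min_{\dim D = r}\wt_R(D) = d_{R,r}(C).
\]
Assembling the three identifications completes the proof.

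The step I expect to require the most care is the application of Theorem \ref{thm:geneJuPe} to obtain $\maxwt_R(D) = \wt_R(D)$ for \emph{all} subspaces $D$ involved — both $D \subseteq C$ and the closures $D^*$. One must check the ambient dimension hypothesis is met in each case: the subspaces $D$ and $D^*$ live in $L^n$ with $n \leq m$, so the theorem applies verbatim; there is no need to pass to $\Rsupp(D)_L$ or otherwise shrink the ambient space, though one could equally invoke the reduction already built into the proof of Theorem \ref{thm:geneJuPe}. A minor subtlety worth spelling out is that $OS_r$ and $D_r$ are defined as minima of $\maxwt_R$, and the identity $\maxwt_R(-) = \wt_R(-)$ must be applied inside the minimum, uniformly over all $r$-dimensional $D$; once that pointwise identity is established the minima match termwise. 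Everything else is bookkeeping with Proposition \ref{Rsupp-C*}, Corollary \ref{RsuppC=C*}, and Remark \ref{C*Galois=C*}.
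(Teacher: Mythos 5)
Your proof is correct and follows essentially the same route as the paper: Theorem \ref{thm:geneJuPe} gives $\maxwt_R(D)=\wt_R(D)$ for all $D\subseteq L^n$ when $n\leq m$, Corollary \ref{RsuppC=C*} gives $\wt_R(D^*)=\wt_R(D)$, and Proposition \ref{Rsupp-C*} gives $\dim D^*=\wt_R(D)$. The only difference is that you write out in full the argument for $d_{R,r}(C)={\cal M}_r(C)$ (via the identity ${\cal M}_r(C)=\min_{\dim D=r}\dim D^*$), where the paper defers to \cite{Jurrius17} after noting that the two ingredients of that proof carry over; your explicit version is a welcome expansion but not a different method.
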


\begin{proof}
Since $m\geq n$, Theorem \ref{thm:geneJuPe} implies that $ \maxwt_R(D)= \wt_R(D)$, for all subspaces $D$ of $L^n$.
This gives  $d_{R,r}(C)=O\!S_r(C)$. %the first equality. 
To prove $O\!S_r(C)=D_r(C)$, %the second one, 
we combine Theorem \ref{thm:geneJuPe} with the fact  that, 
 by Corollary \ref{RsuppC=C*},  $D$ and $D^*$  have the same rank support, and therefore 
$ \wt_R(D)= \wt_R(D^*)$. %\odile{ou Proposition \ref{elem} si cas galoisien}

 Finally, the proof given in \cite [4.4 and 5.4]{Jurrius17}  that $d_{R,r}(C)={\cal M}_r(C)$
 whenever  $L/k$ is Galois relies both on the equality   $\dim D^*=\wt_R(D)$, which we have proved in \ref{Rsupp-C*}
 in the setting of arbitrary finite extensions, 
and on the properties of $D^*$ and $V^*$ stated in the second remark  \ref{C*Galois=C*}. Thus this  equality 
still holds in  our setting.
\end{proof}

%\odile{? ajouter commentaire sur la g\'en\'eralisation de la stricte monotonie des rangs, par KMU, et de l'entrelacement des
%rangs duaux  par Ducoat?}
% \jean{Je propose de garder ca sous la main pour le futur.}

\bibliographystyle{alpha}
\bibliography{General}

\end{document}